\newcites{Appendix}{Additional references}
\theoremstyle{plain}
\newtheorem{theorem}{Theorem}[section]
\newtheorem{lemma}[theorem]{Lemma}
\newtheorem{proposition}[theorem]{Proposition}
\newtheorem{corollary}[theorem]{Corollary}
\theoremstyle{definition}
\newtheorem{algorithm}[theorem]{Algorithm}
\newtheorem{example}[theorem]{Example}
\theoremstyle{remark}
\newtheorem*{remarks}{Remarks}
\newcommand{\ev}{{\mathord \mathrm{E}}}%
\newcommand{\prob}{{\mathord P}}%
\DeclareMathOperator{\id}{\mathit{id}}%
\DeclareMathOperator*{\diag}{diag}%
\newcommand{\pto}{\mathchoice
	{\raisebox{.0em}{ $\overset{\mathrm{P}}{\to}$ }}
	{\raisebox{-.15em}{ $\overset{\raisebox{-.25em}{\scriptsize$\mathrm{P}$}}{\to}$ }}
	{}
	{}}
\newcommand{\xqed}[1]{%
  \leavevmode\unskip\penalty9999 \hbox{}\nobreak\hfill
  \quad\hbox{\ensuremath{#1}}}
\newcommand{\sqed}{\xqed{\square}}
\newcommand{\Sym}{\mathfrak{G}}
\newcommand{\G}{\mathfrak{G}}
\newcommand\mydots{\hbox to 1em{.\hss.\hss.}}
\newcommand{\balpha}{\bar{\alpha}}
\definecolor{umorange}{HTML}{cc6600}
\definecolor{umblue}{HTML}{587abc}
\definecolor{umgrey}{HTML}{989c97}
\definecolor{umred}{HTML}{7a121c}
\definecolor{umgreen}{HTML}{83b2a8}
\begin{document}

\title[Permutation inference with clusters]{Permutation inference with a finite\\ number of heterogeneous clusters}
\author[A.~Hagemann]{Andreas Hagemann
}
\address{University of Michigan, Stephen M. Ross School of Business, 701 Tappan Ave, Ann Arbor, MI 48109, USA. Tel.: +1 (734) 615-6663}
\email{\href{mailto:hagem@umich.edu}{hagem@umich.edu}}
\urladdr{\href{https://umich.edu/~hagem}{umich.edu/~hagem}}
\date{\today.
}
\thanks{I would like to thank co-editor Bryan Graham, three anonymous referees, Isaiah Andrews, Michal Koles\'ar, Aprajit Mahajan, and several seminar audiences for useful comments and discussions. All errors are my own.}

\begin{abstract}
I introduce a simple permutation procedure to test conventional (non-sharp) hypotheses about the effect of a binary treatment in the presence of a finite number of large, heterogeneous clusters when the treatment effect is identified by comparisons across clusters. The procedure asymptotically controls size by applying a level-adjusted permutation test to a suitable statistic. The adjusted permutation test is easy to implement in practice and performs well at conventional levels of significance with at least four treated clusters and a similar number of control clusters. It is particularly robust to situations where some clusters are much more variable than others. 

\vskip .5em \noindent
\emph{JEL classification}: C01, C12, C21\\ 
\emph{Keywords}: cluster-robust inference, randomization, permutation, Behrens-Fisher
\end{abstract}

\maketitle\renewcommand{\bfseries}{\fontseries{b}\selectfont}

\section{Introduction}
It has become widespread practice in economics to conduct inference that is robust to within-cluster dependence. 
Typical examples of clusters are states, counties, cities, schools, firms, or stretches of time. Units within the same cluster are likely to influence one another or are influenced by the same external shocks. 
Several analytical and computationally intensive procedures such as the bootstrap are available to account for the presence of data clusters. Most of these procedures achieve consistency by requiring the number of clusters to go to infinity. Numerical evidence by \citet{bertrandetal2004}, \citet{mackinnonwebb2014}, and others suggests that this type of asymptotics often translates into heavily distorted inference in empirically relevant situations when the number of clusters is small or the clusters are heterogenous. In both situations, the overall finding is that true null hypotheses are rejected far too often. In this paper, I introduce an adjusted permutation procedure that is able to asymptotically control the size of tests about the effect of a binary treatment in the presence of finitely many large and heterogeneous clusters. The procedure applies to difference-in-differences estimation and other situations where treatment occurs in some but not all clusters and the treatment effect of interest is identified by between-cluster comparisons.

The main theoretical insight of this paper is that classical permutation inference can be adjusted to test the null hypothesis of equality of means of two finite samples of mutually independent but arbitrarily heterogeneous normal variables. This runs counter to classical permutation testing \citep{hoeffding1952}, where the data under the null are presumed to be exchangeable. The adjustment corrects the significance level of the test downwards to account for heterogeneity. I prove that this is possible for empirically relevant levels of significance if both samples consist of more than three observations. The corrections needed for all standard levels of significance are tabulated in the paper. I also show that if a random vector of interest converges weakly to multivariate normal with diagonal covariance matrix, then permutation inference remains approximately valid for that vector. To exploit this result in a cluster context, I construct asymptotically normal statistics from each cluster and then apply adjusted permutation inference to the collection of these statistics. The resulting permutation test is consistent against all fixed alternatives to the null, powerful against local alternatives, and is free of user-chosen parameters. 

The strategy of using cluster-level estimates as the basis for a test goes back at least to \citet{famamacbeth1973}, who without formal justification run $t$ tests on regression coefficients obtained from year-by-year cross-sectional regressions. 
Their approach is generalized and formalized by \citet{ibragimovmueller2010, ibragimovmueller2016}, who construct $t$ statistics from cluster-level estimates and show that for certain combinations of numbers of clusters and significance levels these statistics can be compared to Student $t$ critical values.  
The \citetalias{ibragimovmueller2016} test and the adjusted permutation test complement one another because they both rely on finite-sample inference with heterogeneous normal variables but apply to non-nested combinations of numbers of clusters and significance levels. The empirical example in this paper features a practically relevant situation where the \citetalias{ibragimovmueller2016} test does not apply but the adjusted permutation test does. If both tests apply, the Monte Carlo results in this paper indicate that neither test dominates the other in terms of power but the adjusted permutation test has clear advantages if the underlying data are heavy tailed.

Several other papers show that inference with a fixed number of clusters is possible under a variety of conditions: \citet{canayetal2014} permute the signs of cluster-level statistics under symmetry assumptions. This approach requires the parameter of interest to be identified \emph{within} each cluster and clusters therefore have to be paired in an ad-hoc manner for difference-in-differences estimation.  
This pairing has a substantial impact on the test decision and requires a large number of choices on the part of the researcher. \citet{besteretal2014} use standard cluster-robust covariance matrix estimators but adjust critical values under homogeneity assumptions on the clusters. \citet{canayetal2018} show that certain cluster-robust versions of the wild bootstrap 
can be valid under strong homogeneity assumptions with a fixed number of clusters. In sharp contrast, the test developed here does not require pairing clusters or any other decisions on the part of the researcher and applies even if the clusters are arbitrarily heterogeneous.

I will use the following notation: $1\{\cdot\}$ is the indicator function, $\min\{a,b \} = a\wedge b$, and cardinality of a set $A$ is $|A|$. The smallest integer larger than $a$ is $\lceil a \rceil$ and the largest integer smaller than $a$ is $\lfloor a \rfloor$.  
Limits are as $n\to\infty$ unless noted otherwise.

All proofs can be found in the online appendix.

\section{Permutation inference with heterogenous symmetric variables}\label{s:behrensfisher}
In this section I show that classical permutation inference can be adjusted to test for the equality of location of two finite samples of independent symmetric variables with heterogeneous scales. 
The discussion focuses on heterogeneous normal variables but several of the results apply more generally.

Suppose the random vector $X = (X_1,\dots, X_q)\in\mathbb{R}^q$ has entries $X_k = \mu_1 + \sigma_k Z_k$ for $1\leq k\leq q_1$ and $X_k = \mu_0 + \sigma_k Z_k$ for $q_1 +1 \leq k\leq q_1 + q_0 = q$, where the $Z_1,\dots,Z_q$ are iid symmetric variables. The $\sigma_k$ are not known and no estimates are assumed to be available. The number of variables $q$ is taken as fixed throughout this paper.  The goal is to construct an $\alpha$-level permutation test of the hypothesis $H_0\colon \mu_1=\mu_0$. This is a two-sample problem with ``treatment'' sample $X_1,\dots, X_{q_1}$ and ``control'' sample $X_{q_1+1},\dots, X_q$. 
The test statistic $T$ considered here is the comparison of means
\begin{equation}\label{eq:compmean}
	(x_1,\dots, x_q)\mapsto T(x) = \frac{1}{q_1}\sum_{k=1}^{q_1} x_k - \frac{1}{q_0}\sum_{k=q_1+1}^{q} x_k.
\end{equation}
No standardization \mbox{is needed}.

Let $\mathfrak{S}_q$ be the group of permutations of the set $\{1,\dots, q\}$. For $g\in\mathfrak{S}_q$, denote by $g(k)$ the value the permutation $g$ assigns to $k$ for $1\leq k\leq q$. The ``group action'' on $X$ in $\mathfrak{S}_q$ is the relabeling of the indices $gX = (X_{g(1)},\dots, X_{g(q)})$. A permutation test derives its critical values from the permutation statistics $T(gX)$. Because $x\mapsto T(x)$ is invariant to the ordering of the first $q_1$ and last $q_0$ entries of $x$, it suffices to compute the $T(gX)$ for the set of group actions with unique combinations of $g(1),\dots, g(q_1)$ and $g(q_1 + 1),\dots, g(q)$. One way of representing this set is
\begin{equation}\label{eq:combinations}
\Sym = \bigl\{g\in\mathfrak{S}_q : g(1) < \dots < g(q_1)\text{~and~}g(q_1+1) < \dots < g(q)\bigr\}. 
\end{equation}
Denote by $T^{(1)}(X,\Sym) \leq T^{(2)}(X,\Sym)\leq \cdots  \leq T^{(|\Sym|)}(X,\Sym)$
the ordered values of $T(gX)$ as $g$ varies over $\Sym$ and define critical values 
\begin{equation}\label{eq:critval}
	p \mapsto T^{p}(X,\Sym) = T^{(\lceil(1-p)|\Sym|\rceil)}(X,\Sym).
\end{equation}

Classical permutation inference operates under the null hypothesis that $X$ has the same distribution as $gX$ for all $g\in\mathfrak{S}_q$. In the present context this would be equivalent to assuming that $\mu_1 = \mu_0$ and that all $\sigma_k$ are identical under the null. An argument due to \citet{hoeffding1952} would then show that $T^{\alpha}(X,\Sym)$ could be used as the critical value for an $\alpha$-level test against the alternative $H_1\colon \mu_1> \mu_0$. 
If the null hypothesis is weakened to $H_0\colon \mu_1=\mu_0$ without restrictions on $\sigma_k$, a natural question to ask if there exists \emph{any} order statistic $j\mapsto T^{(j)}(X,\Sym)$, $\lceil(1-\alpha)|\Sym|\rceil\leq j< |\Sym|$, that can be used as a critical value for an $\alpha$-level test even if the classical permutation hypothesis $X\sim gX$ for all $g\in\mathfrak{S}_q$ fails. As I will discuss now, the answer to this question is affirmative for empirically relevant choices of $\alpha$ if $q_1$ and $q_0$ are larger than $3$.

Because $T(X)\in \{ T(gX) : g\in\Sym \}$, it is always true that $T(X) \leq T^{(|\Sym|)}(X,\Sym)$. The largest non-trivial critical value from $\{ T(gX) : g\in\Sym \}$ is therefore the second largest order statistic $T^{(|\Sym|-1)}(X,\Sym)$. The following theorem shows that the probability that $T(X)$ exceeds $T^{(|\Sym|-1)}(X,\Sym)$ is necessarily small under $H_0\colon \mu_1=\mu_0$. In fact, this probability is so small that $T(X) > T^{(|\Sym|-1)}(X,\Sym)$ is well below any standard choice of $\alpha$ for most values of $q_1$ and $q_0$. By monotonicity, the existence of a $j$ such that $\prob (T(X) > T^{(j)}(X,\Sym))\leq \alpha$ is then guaranteed. 
\begin{theorem}[Size for heterogeneous symmetric variables]\label{t:bfbound}
Let  $X = (X_1,\dots, X_q)$ with $X_k = \mu + \sigma_k Z_k$,  $1\leq k\leq q$, where $\sigma_1,\dots, \sigma_q >0$ and the $Z_1,\dots, Z_q$ are iid copies of a continuous random variable $Z$. If $Z$ and $-Z$ have the same distribution, then 
\[\sup_{\mu\in\mathbb{R}, \sigma_1,\dots,\sigma_q>0}\prob \bigl(T(X) > T^{(|\Sym|-1)}(X,\Sym)\bigr) = \frac{1}{2^{q_1\wedge  q_0}}.\] 
\end{theorem}

A byproduct of the theorem is a bound for the case where the scales $\sigma_1,\dots, \sigma_q$ are replaced by positive random variables independent of $Z_1,\dots,Z_q$. The $X_k$ are then called ``scale mixtures'' of a symmetric variable $Z$. 
The following corollary is immediately obtained from Theorem \ref{t:bfbound} by conditioning on a given set of random scales.\footnote{A referee points out that \citet{szekely2006} studies one-sample Student $t$-tests for similar classes of distributions. \citeauthor{szekely2006} does not deal with permutation inference and uses a fundamentally different proof technique but the results are also powers of two.}

\begin{corollary}[Size for symmetric scale mixtures]\label{t:symscalebound}
Suppose $X = (X_1,\dots, X_q)$ with $X_k = \mu + S_k Z_k$,  $1\leq k\leq q$, where the $Z_1,\dots, Z_q$ are iid copies of a continuous random variable $Z$ and $(S_1, \dots, S_q$) is a possibly dependent random vector independent of $Z_1,\dots, Z_q$ with $P(S_k > 0) = 1$ for $1\leq k\leq q$. If $Z$ and $-Z$ have the same distribution, then $\sup_{\mu\in\mathbb{R}}\prob (T(X) > T^{(|\Sym|-1)}(X,\Sym)) \leq 1/2^{q_1\wedge  q_0}$.
\end{corollary}

Theorem \ref{t:bfbound} shows that a test with critical value $T^{(|\Sym|-1)}(X,\Sym)$ has size $1/2^{q_1\wedge  q_0} = 0.0625$, $0.0313$, $0.0156$, $0.0078$, $0.0039$ as $q_1\wedge  q_0$ increases from $4$ to $8$. Consequently, a 10\%-level permutation test that relies only on symmetry is available with $q_1$ and $q_0$ as small as $4$. One can perform a 5\%-level test with $q_1\wedge  q_0\geq 5$, a 5\%-level two-sided test (see the discussion below \eqref{eq:adjustedtestdec} ahead) with $q_1\wedge  q_0\geq 6$, a 1\%-level test with $q_1\wedge  q_0 \geq 7$, and a 1\%-level two-sided test with $q_1\wedge  q_0 \geq 8$.

More generally, Theorem \ref{t:bfbound} implies that for many combinations of $q_1$, $q_0$, and $\alpha$ there exist $p\in (0,1)$ such that $\lceil (1-\alpha)|\Sym|\rceil \leq \lceil (1-p)|\Sym|\rceil < |\Sym|$ and $\prob (T(X) > T^{p}(X,\Sym)) \leq \alpha$. The largest such value of $p$ maximizes power while still controlling the size of the test. Finding this $p$ is theoretically and computationally challenging. However, computation can be simplified if $Z$ is restricted to a single distribution. For normal distributions, the best possible $p$ is
\begin{multline} \label{eq:balpha}
\balpha = \sup\biggl\{ p \in [0,1) : \sup_{\mu\in\mathbb{R}, \sigma_1,\dots,\sigma_q>0} \prob\bigl(T(X) > T^{p}(X,\Sym)\bigr) \leq \alpha,\\ X\sim N\bigl(\mu, \diag(\sigma^2_1,\dots, \sigma^2_q)\bigr)\biggr\},	
\end{multline}
where I suppress the dependence on $q_1$ and $q_0$ to prevent notational clutter. By construction, $\balpha$ controls the size of the permutation test not only for arbitrarily heterogeneous normal variables but also for the entire class of scale mixtures of normals. This class includes all Student $t$ and Laplace distributions, as well as many other standard distributions \citep[see, e.g.,][]{gneiting1997}. Moreover, because the critical value is from a permutation distribution, the test also controls size for all exchangeable distributions. The remainder of the paper therefore focuses on this $\balpha$ and heterogeneous normal $X$ but other choices of distributions are possible. 

A convenient feature of $\balpha$ is that it does not depend on the data and can therefore be tabulated. To this end, I use a location-scale invariance argument to reduce the inner supremum in \eqref{eq:balpha} to a supremum over $(0,1]^q$, simulate $\prob$ over large random grids on $(0,1]^q$, and compute $\balpha$ by iteratively searching over these grids. (See Online Appendix~\ref{s:balpha} for details.) 
The search is not exhaustive and does not guarantee that the target quantity in \eqref{eq:balpha} is found. However, in experiments this method consistently replicated the theoretical result in Theorem~\ref{t:bfbound} up to a small approximation error, which indicates---but does not unequivocally establish---that this approximation of $\balpha$ is reliable.

\begin{table}\caption{Values for $\balpha$ as defined in \eqref{eq:balpha} as a function of $q_1$, $q_0$, and $\alpha$.}\label{tb:balphavals}
\centering
\scalebox{.9}{
\begin{tabular}{cp{0cm}cp{0cm}cccccccccc}																								\hline
	&	&		&	&	\multicolumn{9}{c}{$q_0$}																	\\	\cline{5-13}
$\alpha$	&	&	$q_1$	&	&	4	&	5	&	6	&	7	&	8	&	9	&	10	&	11	&	12	\\	\hline
.10	&	&	4	&	&	.0428	&		&		&		&		&		&		&		&		\\	
	&	&	5	&	&	.0317	&	.0595	&		&		&		&		&		&		&		\\	
	&	&	6	&	&	.0238	&	.0432	&	.0660	&		&		&		&		&		&		\\	
	&	&	7	&	&	.0181	&	.0340	&	.0500	&	.0760	&		&		&		&		&		\\	
	&	&	8	&	&	.0161	&	.0303	&	.0493	&	.0600	&	.0813	&		&		&		&		\\	
	&	&	9	&	&	.0153	&	.0246	&	.0400	&	.0580	&	.0740	&	.0900	&		&		&		\\	
	&	&	10	&	&	.0129	&	.0220	&	.0366	&	.0500	&	.0700	&	.0826	&	.0926	&		&		\\	
	&	&	11	&	&	.0153	&	.0193	&	.0313	&	.0420	&	.0606	&	.0746	&	.0853	&	.0953	&		\\	
	&	&	12	&	&	.0106	&	.0193	&	.0260	&	.0420	&	.0580	&	.0673	&	.0800	&	.0926	&	.0953	\\	
																							\rule{0pt}{3ex}
.05	&	&	5	&	&		&	.0158	&		&		&		&		&		&		&		\\	
	&	&	6	&	&		&	.0108	&	.0227	&		&		&		&		&		&		\\	
	&	&	7	&	&		&	.0088	&	.0200	&	.0253	&		&		&		&		&		\\	
	&	&	8	&	&		&	.0062	&	.0120	&	.0233	&	.0306	&		&		&		&		\\	
	&	&	9	&	&		&	.0113	&	.0120	&	.0213	&	.0300	&	.0393	&		&		&		\\	
	&	&	10	&	&		&	.0100	&	.0113	&	.0166	&	.0286	&	.0340	&	.0420	&		&		\\	
	&	&	11	&	&		&	.0100	&	.0080	&	.0153	&	.0240	&	.0313	&	.0393	&	.0440	&		\\	
	&	&	12	&	&		&	.0073	&	.0080	&	.0153	&	.0213	&	.0266	&	.0366	&	.0440	&	.0491	\\	
																							\rule{0pt}{3ex}	
.025	&	&	6	&	&		&		&	.0043	&		&		&		&		&		&		\\	
	&	&	7	&	&		&		&	.0040	&	.0086	&		&		&		&		&		\\	
	&	&	8	&	&		&		&	.0026	&	.0086	&	.0153	&		&		&		&		\\	
	&	&	9	&	&		&		&	.0026	&	.0066	&	.0100	&	.0146	&		&		&		\\	
	&	&	10	&	&		&		&	.0026	&	.0046	&	.0093	&	.0146	&	.0166	&		&		\\	
	&	&	11	&	&		&		&	.0020	&	.0033	&	.0080	&	.0106	&	.0166	&	.0180	&		\\	
	&	&	12	&	&		&		&	.0020	&	.0033	&	.0073	&	.0093	&	.0120	&	.0173	&	.0206	\\	
																							\rule{0pt}{3ex}	
.01	&	&	7	&	&		&		&		&	.0026	&		&		&		&		&		\\	
	&	&	8	&	&		&		&		&	.0013	&	.0026	&		&		&		&		\\	
	&	&	9	&	&		&		&		&	.0013	&	.0020	&	.0033	&		&		&		\\	
	&	&	10	&	&		&		&		&	.0013	&	.0020	&	.0033	&	.0040	&		&		\\	
	&	&	11	&	&		&		&		&	.0013	&	.0020	&	.0033	&	.0040	&	.0066	&		\\	
	&	&	12	&	&		&		&		&	.0013	&	.0013	&	.0026	&	.0033	&	.0053	&	.0066	\\	
																							\rule{0pt}{3ex}	
.005	&	&	8	&	&		&		&		&		&	$*$	&		&		&		&		\\	
	&	&	9	&	&		&		&		&		&	$*$	&	.0013	&		&		&		\\	
	&	&	10	&	&		&		&		&		&	$*$	&	.0013	&	.0013	&		&		\\	
	&	&	11	&	&		&		&		&		&	$*$	&	.0006	&	.0013	&	.0020	&		\\	
	&	&	12	&	&		&		&		&		&	$*$	&	$*$	&	.0013	&	.0020	&	.0033	\\	\hline
\multicolumn{13}{p{.85\textwidth}}{\footnotesize\emph{Note:} $*$ means $T^{\balpha}(X,\G)$ should be the second largest order statistic $T^{(|\G|-1)}(X,\G)$. More values are available at \href{https://hgmn.github.io/ap/}{\texttt{https://hgmn.github.io/ap}}.}																
\end{tabular}																								
}			
\end{table}

 Table \ref{tb:balphavals} lists $\balpha$ for common choices of $\alpha$ as a function of $q_1$ and $q_0$. As can be seen, the adjustment needed to make inference robust to variance heterogeneity is substantial if $q_1\wedge q_0$ is very small but disappears quickly as $q_1\wedge q_0$ increases. For example, for $q_1 = 4 = q_0$ a robust 10\%-level test requires using the 95.62\% quantile of the unadjusted test but for $q_1 = 9 = q_0$ the 91\% quantile is already sufficient for a robust 10\%-level test. For larger numbers of variables the need for adjustment nearly disappears at conventional levels of significance. This is also confirmed by results in \citet{hagemann2019}, who shows that unadjusted permutation inference in this context with the statistic $T(X)$ is consistent if the number of treated and control units grows in a balanced manner.

The test decision is now simple. For $q_1 \wedge q_0 > 3$, choose $\balpha$ for a feasible $\alpha$ from Table~\ref{tb:balphavals} to ensure $\prob (T(X) > T^{\balpha}(X,\Sym))\leq \alpha$ under $H_0\colon \mu_1 = \mu_0$. The existence of such an $\balpha$ for the comparison-of-means test statistic $T$ is guaranteed by Theorem~\ref{t:bfbound}. For an $\alpha$-level test of the null hypothesis $H_0\colon \mu_1 = \mu_0$, reject in favor of the alternative $H_1\colon \mu_1 > \mu_0$ if 
\begin{equation} \label{eq:adjustedtestdec}
	T(X) >  T^{\balpha}(X,\Sym).
\end{equation} 
For a one-sided test of level $\alpha$ against $\mu_1 < \mu_0$, reject if $T(-X) >  T^{\balpha}(-X,\Sym)$ or, equivalently, $T(X) <  T^{(\lfloor |\Sym|\balpha \rfloor)}(X,\Sym)$. For a two-sided test of level $2\alpha$ against $\mu_1 \neq \mu_0$, reject if $T(X) >  T^{\balpha}(X,\Sym)$ or $T(-X) >  T^{\balpha}(-X,\Sym)$. Test decisions can also be equivalently made with the \emph{p}-value of the unadjusted test 
\begin{equation}\label{eq:pval}
\hat{p}(X, \Sym) =  \inf\{ p \in (0,1) : T(X) >  T^{p}(X,\Sym) \} = \frac{1}{|\Sym|}\sum_{g\in\Sym}1\{ T(gX) \geq T(X) \}
\end{equation}
because $T(X) >  T^{p}(X,\Sym)$ if and only if $\hat{p}(X, \Sym) \leq p$ for every $p \in (0,1)$.
A $p$-value for a two-sided test can be defined as
	$2( \hat{p}(X, \Sym) \wedge \hat{p}(-X, \Sym) ).$ Reject the null hypothesis if the $p$-value does not exceed $\balpha$ from Table \ref{tb:balphavals} to perform an $\alpha$-level test.

Online Appendix \ref{s:moretheory} contains additional results on power, stochastic approximation of $\G$, and large sample approximation of $X$. The next section applies Theorem \ref{t:bfbound} to situations where $X$ is the distributional limit of cluster-level statistics.

\section{Permutation inference with heterogenous clusters}\label{s:clusterperm}
In this section, I establish large sample results for an adjusted permutation test with finitely many clusters under a single high-level condition. I then outline how these results can be applied in empirical practice.

\addline

Suppose data from $q$ large clusters (e.g., counties, regions, schools, firms, or stretches of time) are available. Observations are independent across clusters but dependent within clusters. An intervention took place during which clusters $1\leq k\leq q_1$ received treatment and clusters $q_1 + 1\leq k\leq q$ did not. The quantity of interest is a treatment effect or an object related to a treatment effect that can be represented by a scalar parameter~$\delta$. Because entire clusters receive treatment, this parameter is only identified up to a location shift $\theta_0$ within a treated cluster. Hence, only the left-hand side of \[ \theta_1 = \theta_0 + \delta \] can be identified from such a cluster. If the clusters have similar characteristics, then $\theta_0$ can be identified from an untreated cluster. Comparing the two clusters identifies $\delta$.

The identification strategy outlined in the preceding paragraph is the basis for differences-in-differences estimation---arguably the most popular identification strategy in economics today---and a variety of other models. The purpose of this section is to use the results from Section \ref{s:behrensfisher} to develop a permutation test of the conventional (non-sharp) hypothesis 
\[H_0\colon \delta = 0, \] 
or, equivalently, $H_0\colon \theta_1 = \theta_0$. The idea is to obtain independent estimates \smash{$\hat{\theta}_{n,1},\dots, \hat{\theta}_{n,q_1}$} of $\theta_1$ and independent estimates \smash{$\hat{\theta}_{n,q_1 + 1},\dots, \hat{\theta}_{n,q}$} of $\theta_0$ so that $\hat{\theta}_n = (\hat{\theta}_{n,1},\dots, \hat{\theta}_{n,q})$ is approximately multivariate normal with diagonal covariance matrix. 
The following example outlines a simple situation where this is possible. 

\begin{example}[Difference in differences]\label{ex:diffindiff}
Consider the regression model
\begin{equation}\label{eq:diffindiff}
Y_{t,k} = \theta_0 I_t + \delta I_t D_{k} + \beta_k' X_{t,k} + \zeta_k + U_{t,k},
\end{equation}
where $k$ indexes individual units, $t$ indexes time, $I_t=1\{t>n_{0,k}\}$ indicates time periods after an intervention at a known time $n_{0,k}$, the dummy $D_k$ indicates whether unit $k$ eventually received treatment, and the $\zeta_k$ are individual fixed effects. 
Provided $U_{t,k}$ has conditional mean zero and the covariates $X_{t,k}$ vary before or after $n_{0,k}$, the data identify $\theta_1 = \theta_0 + \delta$ in a treated cluster and $\theta_0$ in an untreated cluster. View each cluster as a separate regression and rewrite \eqref{eq:diffindiff} as
\begin{equation}\label{eq:diffindiffsep}
Y_{t,k} = 
\begin{cases} 
\theta_1 I_t + \beta_k' X_{t,k} + \zeta_k + U_{t,k}, &1\leq k\leq q_1, \\ 
\theta_0 I_t + \beta_k' X_{t,k} + \zeta_k + U_{t,k}, &q_1< k\leq q
\end{cases}
\end{equation}
and use the least squares estimates $\hat{\theta}_{n,k}$ of $\theta_1$ and $\theta_0$ as $\hat{\theta}_n = (\hat{\theta}_{n,1},\dots, \hat{\theta}_{n,q})$. \sqed
\end{example}

\vspace*{-\baselineskip}

The cluster-level statistics $\hat{\theta}_n$ can be combined with the results in the previous section to perform a consistent permutation test as the sample size $n$ grows large. The test is not limited to the $\hat{\theta}_n$ constructed in the preceding example. Instead, the key high-level condition is that a centered and scaled version of some estimate $\hat{\theta}_n$ converges to a $q$-dimensional standard normal distribution,
\begin{equation}\label{eq:jointconv}
\sqrt{n}\biggl(\frac{\hat{\theta}_{n,1} - \theta_1}{\sigma_1(\theta_1)}, \dots, \frac{\hat{\theta}_{n,q_1} - \theta_1}{\sigma_{q_1}(\theta_1)},\frac{\hat{\theta}_{n,q_1+1} - \theta_0}{\sigma_{q_1+1}(\theta_0)}, \dots, \frac{\hat{\theta}_{n,q} - \theta_0}{\sigma_{q}(\theta_0)}\biggr) \leadsto N(0, I_q).	
\end{equation}
The $\sigma_1,\dots, \sigma_q$ may depend on $\theta_1$ or $\theta_0$ but are not presumed to be known or estimable by the researcher. 
This is an important feature of the test because consistent covariance matrix estimation would require knowledge of an explicit ordering of the dependence structure within each cluster. While ordering the data is straightforward for time-dependent data, it may be difficult or impossible to infer or credibly assume an ordering of the data within villages or schools. In contrast, \eqref{eq:jointconv} can be established under weak dependence assumptions where it is only presumed that there \emph{exists} a possibly unknown ordering for which the dependence decays at a certain rate. 
\citet{machkouriaetal2013} present easy-to-use moment bounds and limit theorems for this situation; see also \citet{besteretal2014} for further results.

I now show that under the joint convergence \eqref{eq:jointconv} a permutation test based on comparison of means of $\smash{\hat{\theta}_{n,1},\dots, \hat{\theta}_{n,q_1}}$ and \smash{$\hat{\theta}_{n,q_1 + 1},\dots, \hat{\theta}_{n,q}$} can be adjusted to be asymptotically of level $\alpha$ with a fixed number of clusters. This is possible for $q_1 \wedge q_0 > 3$ if $\balpha$ in Table \ref{tb:balphavals} is available at the desired significance level $\alpha$. In that case, the test has power against fixed alternatives $\theta_1 = \theta_0 + \delta$ with $\delta > 0$ and local alternatives $\theta_1 = \theta_0 + \delta/\sqrt{n}$ converging to the null. In the latter situation, $\theta_0$ is fixed and $\theta_1$ implicitly depends on $n$. The convergence in \eqref{eq:jointconv} is then no longer pointwise in $\theta = 
(\theta_1,\theta_0)$ but a statement about the sequence $\theta_n = (\theta_0 + \delta/\sqrt{n}, \theta_0)$. As before, the test can be made two-sided to have power against fixed and local alternatives from either direction. Let $x\mapsto\tilde{\Phi}_{\theta_0}(x) = \prod_{1\leq k\leq q_0}\Phi(x/\sigma_{k+q_1}(\theta_0))$.
\begin{theorem}[Consistency and local power]\label{t:behrensfisherasy}
Suppose \eqref{eq:jointconv} holds.  
If $\theta_1 = \theta_0$, then \[\lim_{n\to\infty}\prob_\theta \bigl( T(\hat{\theta}_n) >  T^{\balpha}(\hat{\theta}_n,\Sym)\bigr) \leq \alpha.\]
Let $\balpha \geq 1/|\Sym|$. If $\theta_1 = \theta_0 + \delta$ with $\delta > 0$, then $\prob_\theta ( T(\hat{\theta}_n) >  T^{\balpha}(\hat{\theta}_n,\Sym)) \to 1$. If $\theta_1 = \theta_0 + \delta/\sqrt{n}$ and the $\sigma_1,\dots, \sigma_{q}$ are continuous and positive at $\theta_0$, then 
\begin{equation}\label{eq:asylocalpower}
\lim_{n\to\infty}\prob_{\theta_n} \bigl(T(\hat{\theta}_n) > T^{\balpha}(\hat{\theta}_n,\Sym)\bigr)  \geq \int_0^1 \prod_{1 \leq j \leq  q_1} \Phi \Biggl( \frac{\delta - \tilde{\Phi}^{-1}_{\theta_0}(t)}{\sigma_j(\theta_0)} \Biggr)dt.
\end{equation}
\end{theorem}

\begin{remarks}
(i)~Because $T(\hat{\theta}_n) >  T^{\balpha}(\hat{\theta}_n,\Sym)$ if and only if $T(a(\hat{\theta}_n - \theta_0 1_q)) >  T^{\balpha}(a(\hat{\theta}_n - \theta_01_q),\Sym)$, where $a> 0$ and $1_q$ is a $q$-vector of ones, the root-$n$ rate in \eqref{eq:jointconv} and in the theorem can be replaced by any other rate as long as the asymptotic normal distribution in \eqref{eq:jointconv} is still attained. The theorem therefore covers several semiparametric or nonstandard estimators.

(ii)~To test $H_0\colon \theta_1 = \theta_0 + \lambda$ for a given $\lambda$, define $\Lambda = (\lambda 1\{k\leq q_1 \})_{1\leq k\leq q}$ and reject if $T(\hat{\theta}_n - \Lambda) >  T^{\balpha}(\hat{\theta}_n - \Lambda,\Sym)$. 
Consistency follows from part~(i) of this remark and Theorem~\ref{t:behrensfisherasy}.

(iii)~If evaluating all elements of $\Sym$ is too costly, the computational burden can be reduced by working with a random sample $\Sym_m$ of $m$ random draws from $\Sym$. As long as $m\to\infty$ and then $n\to\infty$, the theorem and parts (i)-(ii) of this remark also hold for $\Sym_m$ with the exception of the local power bound if $\balpha|\Sym|$ happens to be an integer. In that case, the inequality \eqref{eq:asylocalpower} holds after subtracting $\prob(\hat{p}(Y, \Sym) = \balpha)/2$ from its right-hand side, where $Y = (\sigma_1(\theta_0)Z_1, \dots, \sigma_q(\theta_0)Z_q)$, the $Z_1,\dots, Z_q$ are independent standard normal, and $\hat{p}$ is defined in \eqref{eq:pval}. This corrects for the discreteness of the test. (See also Online Appendix \ref{s:moretheory}.)  \sqed 

\end{remarks}

\begin{example}[Difference in differences, cont.]\label{ex:diffindiffcont}
Suppose there are $n_{0,k}$ pre-intervention and $n_{1,k}$ post-intervention periods for unit $k$. The data from the $n_k = n_{0,k} + n_{1,k}$ time periods available for unit $k$ are the $k$-th cluster. Let $n = \sum_{k=1}^q n_k$.  In the absence of covariates (i.e., $\beta_k\equiv 0$), each least squares estimate in \eqref{eq:diffindiffsep} satisfies 
\[ \sqrt{n}(\hat{\theta}_{n,k} - \theta_0) = \biggl(\frac{n}{n_{1,k}}\biggr)^{1/2} n_{1,k}^{-1/2}\sum_{t=n_{0,k}+ 1}^{n_{k}}U_{t,k} - \biggl(\frac{n}{n_{0,k}}\biggr)^{1/2} n_{0,k}^{-1/2}\sum_{t=1}^{n_{0,k}}U_{t,k} \] 
under $H_0$. If the pre-intervention and post-intervention periods are long in the sense that $n/n_{0,k} \to c_{0,k} \in (0,\infty)$ and $n/n_{1,k} \to c_{1,k} \in (0,\infty)$ for $1\leq k\leq q$, then condition \eqref{eq:jointconv} already holds if $n^{-1/2}(\sum_{t=1}^{n_{0,k}}U_{t,k},$ $\sum_{t=n_{0,k}+ 1}^{n_{k}}U_{t,k})$ is independent across $1\leq k\leq q$ and has a non-degenerate normal limiting distribution for each $k$. A large number of central limit theorems for time dependent data exist; see, e.g., \citet{white2001}. 
Alternatively, if relatively few post-intervention periods are available so that $n_1 = \sum_{k=1}^q n_{1,k}$ satisfies $n_{1}/n_{0,k} \to 0$ and $n_1/n_{1,k} \to c_{1,k} \in (0,\infty)$ for $1\leq k\leq q$, the scale invariance of the test allows replacement of the $\sqrt{n}$ in \eqref{eq:jointconv} by $\sqrt{n_1}$. Then \eqref{eq:jointconv} holds if $n_{0,k}^{-1/2}\sum_{t=1}^{n_{0,k}}U_{t,k} = O_P(1)$ and $n_{1,k}^{-1/2}\sum_{t=n_{0,k}+ 1}^{n_{k}}U_{t,k}$ obeys a central limit theorem for $1\leq k\leq q$. This argument also applies if relatively few pre-intervention periods are available with the roles of $n_{0,k}$ and $n_{1,k}$ reversed. If the pre-intervention and post-intervention periods are short, Theorem \ref{t:bfbound} implies that the permutation test can still be applied if $(U_{t,k})_{1\leq t\leq n_k}$ is multivariate normal for $1\leq k\leq q$. 
 
 The calculations in the preceding paragraph can be adjusted to include covariates. 
 Similar calculations also apply if each cluster is a collection of individual-level data over time, 
 although in that case more general limit theory is needed. See, e.g, \citet{jenischprucha2009} and \citet{machkouriaetal2013} for appropriate results. 
 
The model in \eqref{eq:diffindiff} can be modified in several ways. For instance, cluster-specific $\delta_k$ can be assumed instead of a fixed $\delta$. The null hypothesis is then $\delta_k = 0$ for all $k\leq q_1$ and the test has power against the alternative $\min_{k\leq q_1} \delta_k > 0$ without changes to estimation and inference. (Conversely, the parameter $\beta_k$ does not need to vary across clusters for the results to go through.) The method discussed here can also be applied in difference-in-difference designs with staggered adoption  \citep[see, e.g.,][]{chaisemartindhaultfoeille2020}. However, as \citet{rothetal2022} point out, $\theta_0$ cannot vary by cluster, which rules out heterogeneous trends in untreated potential outcomes across clusters. 
 \sqed	
\end{example}

Online Appendix \ref{s:moreexamples} provides more practical guidance for the implementation of the adjusted permutation test and applies the test to several standard econometric models.

\section{Numerical results}\label{s:montecarlo}

This section studies the behavior of the adjusted permutation test and related methods 
in a Monte Carlo experiment and in data from a randomized trial. The discussion focuses on one-sided tests to the right but the results apply more generally. Online Appendix \ref{s:morenumerical} contains additional numerical examples and empirical applications.

\begin{example}[Difference in differences, cont.]\label{ex:diffindiffnum} This example explores the behavior of the adjusted permutation (AP hereafter) test, the \citet[IM]{ibragimovmueller2016} test (see Online Appendix \ref{s:morenumerical} for a description and more results), the \citet[BCH]{besteretal2014} test, and a clustered wild bootstrap \citep[WCB]{cameronetal2008} in a version of a Monte Carlo experiment in \citet{conleytaber2011}. The BCH test estimates parameters by least squares in the pooled sample and standardizes this estimate with the usual cluster-robust covariance matrix with a degrees-of-freedom adjustment.  
The resulting statistic is compared to the $1-\alpha$ quantile of $t$ distribution with $q-1$ degrees of freedom. BCH show that this test is valid for certain ranges of $q$ and $\alpha$ under regularity conditions if the distribution of the covariates is very similar across clusters. The WCB takes the same statistic but compares it to the bootstrap distribution of the statistic obtained from the cluster-robust version of the wild bootstrap using the Rademacher distribution and with the null  imposed. This procedure is outlined in detail in \citet{cameronetal2008}. It is valid with $q\to \infty$ \citep{djogbenouetal2019} under mild homogeneity conditions and valid for fixed $q$ under strong homogeneity conditions \citep{canayetal2018}. The bootstrap here uses 199 repetitions.

The data generating process is the model in \eqref{eq:diffindiff} specialized to
\begin{gather*}
Y_{t,k} = \theta_0 I_t + \delta I_t D_{k} + \beta_1 X_{1, t,k} + \beta_2 X_{2, t,k} + \beta_3 X_{3, t,k} + \zeta_k + U_{t,k},\\	
U_{t, k} = \rho U_{t-1, k} + V_{t,k}, \qquad X_{1, t,k} = \gamma I_t D_{k} + W_{t,k},
\end{gather*}
with $\theta_0 = \beta_1 = \beta_2 = \beta_3 = 1$, $\zeta_k\equiv 1$, $\rho  = 0.5$, and $\gamma = 0.8$. As before, $I_t = 1\{ t > n_{0,k} \}$ is a post-intervention indicator and $D_k$ is a treatment indicator. There are $n_{0,k} \equiv 10$ pre-intervention and $n_{1,k}\equiv 10$ post-intervention periods, six clusters received treatment, and six did not. 
I consider $(X_{2,t,k}, X_{3,t,k}, V_{t,k}, W_{t,k}) \sim N(0, \sigma_k^2 \mathrm{I})$ for every $1\leq k\leq q$ and $t$. The experiment varies $\delta \in \{0, 1, 2, 3\}$ and cluster heterogeneity $h$ as follows: for $h\in\{ 1, 3, 5, 7 \}$, the last $h$ clusters had $\sigma_{q-h+1} = \dots = \sigma_q = 20$ and the remaining $q-h$ clusters had $\sigma_{1} = \dots = \sigma_{q-h} = 1$. 

\begin{table}\caption{Rejection frequencies of the adjusted permutation test (AP) test, \citetalias{ibragimovmueller2016} (IM) test, \citetalias{besteretal2014} (BCH) test, wild cluster bootstrap (WCB), and an oracle version of the \citetalias{canayetal2014} (CRS) test for increasing degrees of heterogeneity $h$ in Example~\ref{ex:diffindiffnum}.}\label{tb:did}
\scalebox{.9}{
\begin{tabular}{cp{0cm}cccccp{0cm}ccccc}																								\hline	
	&	&	\phantom{oracle}	&	\phantom{oracle}	&	\phantom{oracle}	&	\phantom{oracle}	&	oracle	&	&	\phantom{oracle}	&	\phantom{oracle}	&	\phantom{oracle}	&	\phantom{oracle}	&	oracle	\\		
	&	&	AP	&	IM	&	BCH	&	WCB	&	CRS	&	&	AP	&	IM	&	BCH	&	WCB	&	CRS	\\	\cline{3-7} \cline{9-13}	
$h$	&	&	\multicolumn{5}{c}{$\delta = 0$ (size)}									&	&	\multicolumn{5}{c}{$\delta = 1$ (power)}									\\	\cline{1-1}\cline{3-7} \cline{9-13}	
1	&	&	.0244	&	.0086	&	.0265	&	.0392	&	.0474	&	&	.2826	&	.1176	&	.2930	&	.3981	&	.4437	\\		
3	&	&	.0316	&	.0287	&	.0641	&	.0538	&	.0513	&	&	.1214	&	.0706	&	.1433	&	.1493	&	.1627	\\		
5	&	&	.0377	&	.0507	&	.0787	&	.0635	&	.0451	&	&	.0549	&	.0662	&	.1086	&	.0887	&	.0792	\\		
7	&	&	.0358	&	.0475	&	.0735	&	.0634	&	.0442	&	&	.0438	&	.0560	&	.0924	&	.0791	&	.0659	\\		
\rule{0pt}{3ex}		
	&	&	\multicolumn{5}{c}{$\delta = 2$ (power)}									&	&	\multicolumn{5}{c}{$\delta = 3$ (power)}									\\	\cline{3-7} \cline{9-13}	
1	&	&	.5541	&	.3142	&	.5631	&	.6234	&	.6036	&	&	.6227	&	.4797	&	.7001	&	.7054	&	.6799	\\		
3	&	&	.1896	&	.1263	&	.2375	&	.2435	&	.2410	&	&	.2445	&	.1900	&	.3448	&	.3420	&	.3056	\\		
5	&	&	.0728	&	.0889	&	.1566	&	.1325	&	.1192	&	&	.0982	&	.1188	&	.2214	&	.1897	&	.1565	\\		
7	&	&	.0533	&	.0707	&	.1306	&	.1110	&	.0908	&	&	.0715	&	.0915	&	.1715	&	.1488	&	.1168	\\	\hline	
\end{tabular}																									}																	
\end{table}

Table \ref{tb:did} shows the rejection frequencies of the four tests outlined above under the null and the alternative. Each entry was computed from 10,000 Monte Carlo simulations and all methods were faced with the same data. As can be seen, all tests were conservative when there was little heterogeneity ($h=1$). However, the BCH test and the WCB were no longer able to control size as the heterogeneity increased. The over-rejection in both methods led to higher rejection frequencies under the alternative, which therefore should not be viewed as evidence of their power.  
The AP test rejected far more false nulls than the IM test when there was little heterogeneity. As the heterogeneity increased, the IM test had a slight advantage. The BCH test and the WCB performed well at $h=1$. However, even then there was little cost to using the AP test. It rejected nearly as many false nulls as the BCH test and at most 11.55 percentage points fewer false nulls than the WCB but was able to control size.

Several other methods for inference specifically designed for difference in differences such as \citet{donaldlang2007} and \citet{conleytaber2011} are available. Here I focus only on methods that apply more broadly and that are valid with a fixed number of clusters. The test of \citet[CRS]{canayetal2014} technically applies here but requires matching each treated cluster with a control cluster. In the present example, there are $6! = 720$ potential matches and equally many potential tests. A single match is enough to perform the test but different matches can lead to different test outcomes. This arbitrariness can be unattractive in applied work because the number of ways in which tests can be selected (and potentially combined) is large. However, if a pilot study or pre-analysis plan prescribed the cluster pairs, the (randomized) CRS test would be asymptotically similar and therefore provides a useful benchmark for the AP test. To this end, Table~\ref{tb:did} shows results of an oracle version of the CRS test that presumes that a pre-analysis plan is in place. As can be seen, the AP test compares well to the CRS test while completely avoiding the issue that different cluster pairs can lead to different test results.   \sqed
\end{example}

\begin{example}[Achievement awards; \citealt{angristlavy2009}]\label{ex:ag09} In this example, I reanalyze data from a randomized trial of \citet{angristlavy2009} in Israel. Their intervention provided cash rewards to low-achieving high school students if they performed well on the Bagrut certification exams  
for university admission in Israel. I follow the analysis in Table~5 of \citet{angristlavy2009} and focus on 32 schools in the sample for which Bagrut rates from 2000 to 2002 are available. Of these schools, 15 received treatment and 17 did not. Because 5 schools did not comply with treatment, the estimates below should be interpreted as intent-to-treat effects. Following \citeauthor{angristlavy2009}, I investigate the performance of girls in the June 2001 exams who were close to achieving Bagrut certification in the sense that they were ranked above the median of the credit-weighted January 2001 scores of girls. The sample also includes all girls who were above the median in 2000 and 2002. The 2948 girls who met these criteria had an above 50\% chance of Bagrut certification. I view each school over time as a cluster, which yields an average cluster size of approximately 92 students. 

\citet{angristlavy2009} report a large number of specifications. I consider a version of their fixed-effects model and estimate $Y_{i,t,k} = \theta_0 I_t + \delta D_k I_t + \eta J_t + \beta \mathit{top}_i + \zeta_k  + U_{i,t,k},$ where $i$ indexes students, $t$ indexes time, $k$ indexes schools, $Y_{i,k}$ indicates Bagrut status, $D_k$ is the treatment indicator, $I_t$ equals $1$ in 2001 and is $0$ otherwise, $J_t$ equals $1$ in 2002 and is $0$ otherwise, $\mathit{top}_i$ indicates whether a student is in the top quartile of the pre-Bagrut grade distribution of girls in the cohort, and $\zeta_k$ is a school fixed effect. \citeauthor{angristlavy2009} estimate several related specifications by logit in their Table~5. They report heteroskedasticity-robust standard errors for that table and argue that clustering is accounted for by their fixed effects.  For simplicity and ease of interpretation, I estimate the model by least squares. The model predicts an average increase in the probability of receiving Bagrut status by $0.114$ relative to a mean of $0.539$ with a robust standard error of $0.037$. A null of no effect against the alternative that $\delta$ is positive is rejected at any conventional significance level if standard normal critical values are used. This is in line with Table~5, col.~(3) of \citet{angristlavy2009}, who report significant effects ranging from $0.093$ to $0.168$ with standard errors ranging from $0.039$ to $0.045$ for this sample and several subsamples.

\begin{figure}[t]
\centering
\includegraphics[width=.9\textwidth]{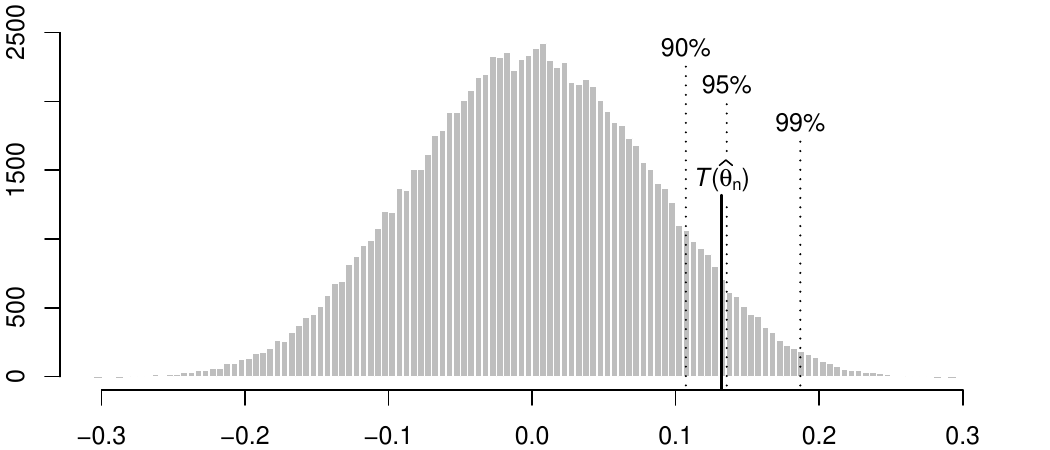}
\caption{Histogram of the permutation distribution of $T(\hat{\theta}_n) \approx 0.132$ (solid black line) from Example \ref{ex:ag09} with 90\%, 95\%, and 99\% critical values (dotted lines).}
\label{fig:ag09}
\end{figure}

To apply the adjusted permutation test, I view each cluster as an individual regression and separately estimate each of the $q=32$ equations in
\begin{equation*}
Y_{i,t,k} = 
\begin{cases} 
\theta_1 I_t + \eta J_t + \beta \mathit{top}_i + \zeta_k + U_{i,t,k}, &1\leq k\leq 15, \\ 
\theta_0 I_t + \eta J_t + \beta \mathit{top}_i + \zeta_k + U_{i,t,k}, &15< k\leq 32.
\end{cases}
\end{equation*}
Note that $\zeta_k$ is now simply the constant term in each regression. The resulting test statistic $T(\hat{\theta}_n) \approx 0.132$ can be viewed as an alternative point estimate of $\delta$ and is comparable in magnitude to the estimates reported in \citet{angristlavy2009}. However, as can be seen in Figure \ref{fig:ag09}, which plots the permutation distribution from 100,000 draws together with the corresponding critical values, the adjusted permutation test only rejects the null of no effect in favor of a positive effect at the 10\% level and barely does not reject at the 5\% level. If the fixed effects in the regression do not fully account for the within-cluster dependence in the data, the positive effect for girls may therefore be far less significant than previously reported. This result in also line with \citeauthor{angristlavy2009}, who find substantial but statistically marginal positive effects for girls across a wide variety of plausible specifications when they use cluster-robust standard errors. 
Also note that the 5\% and 10\% level one-sided tests performed here are outside the feasible range of the \citet{ibragimovmueller2016} test. For the \citet{canayetal2014} test, there are $17!/2 \approx 1.78\times 10^{14}$ ways of testing if 15 treated clusters are paired with 15 control clusters and two control clusters are dropped. In 1,000 randomly chosen unique pairings, the \citet{canayetal2014} test rejected the null of no effect against $\delta > 0$ for 425 pairings at the 5\% level and in 48 pairings at the 1\% level. Any desired conclusion could be reached by choosing a specific pairing.
\sqed

\end{example}

\bibliographystyle{chicago-ff}
\bibliography{qspec.bib}

\begin{thebibliography}{}

\bibitem[\protect\citeauthoryear{Bloom, Eifert, Mahajan, McKenzie, and
  Roberts}{Bloom et~al.}{2013}]{bloometal2013}
Bloom, N., B.~Eifert, A.~Mahajan, D.~McKenzie, and J.~Roberts (2013).
\newblock Does management matter? {E}vidence from {I}ndia.
\newblock {\em The Quarterly Journal of Economics\/}~{\em 128:1}, 1--51.

\bibitem[\protect\citeauthoryear{Davidson and MacKinnon}{Davidson and
  MacKinnon}{2000}]{davidsonmackinnon2000}
Davidson, R. and J.~G. MacKinnon (2000).
\newblock Bootstrap tests: how many bootstraps?
\newblock {\em Econometric Reviews\/}~{\em 19:1}, 55--68.

\bibitem[\protect\citeauthoryear{Keim}{Keim}{1983}]{keim1983}
Keim, D.~B. (1983).
\newblock Size related anomalies and stock return seasonality: further
  empirical evidence.
\newblock {\em Journal of Financial Economics\/}~{\em 12}, 13--32.

\bibitem[\protect\citeauthoryear{Rosenbaum}{Rosenbaum}{1984}]{rosenbaum1984}
Rosenbaum, P.~R. (1984).
\newblock Conditional permutation tests and the propensity score in
  observational studies.
\newblock {\em Journal of the American Statistical Association\/}~{\em 79},
  565--574.

\bibitem[\protect\citeauthoryear{S\l{}oczy\'nski}{S\l{}oczy\'nski}{2018}]{sloczynski2018}
S\l{}oczy\'nski, T. (2018).
\newblock A general weighted average representation of the ordinary and
  two-stage least squares estimands.
\newblock Working paper, Department of Economics, Brandeis University.

\end{thebibliography}


\begin{thebibliography}{}

\bibitem[\protect\citeauthoryear{Angrist and Lavy}{Angrist and
  Lavy}{2009}]{angristlavy2009}
Angrist, Joshua and Victor Lavy (2009).
\newblock The effects of high stakes high school achievement awards: Evidence
  from a randomized trial.
\newblock {\em American Economic Review\/}~{\em 99:4}, 301--331.

\bibitem[\protect\citeauthoryear{Bertrand, Duflo, and Mullainathan}{Bertrand
  et~al.}{2004}]{bertrandetal2004}
Bertrand, Marianne, Esther Duflo, and Sendhil Mullainathan (2004).
\newblock How much should we trust differences-in-differences estimates?
\newblock {\em Quarterly Journal of Economics\/}~{\em 119:1}, 249--275.

\bibitem[\protect\citeauthoryear{Bester, Conley, and Hansen}{Bester
  et~al.}{2011}]{besteretal2014}
Bester, C.~Alan, Timothy~G. Conley, and Christian~B. Hansen (2011).
\newblock Inference with dependent data using cluster covariance estimators.
\newblock {\em Journal of Econometrics\/}~{\em 165:2}, 137--151.

\bibitem[\protect\citeauthoryear{Cameron, Gelbach, and Miller}{Cameron
  et~al.}{2008}]{cameronetal2008}
Cameron, A.~Colin, Jonah~B. Gelbach, and Douglas~L. Miller (2008).
\newblock Bootstrap-based improvements for inference with clustered errors.
\newblock {\em Review of Economics and Statistics\/}~{\em 90:3}, 414--427.

\bibitem[\protect\citeauthoryear{Canay, Romano, and Shaikh}{Canay
  et~al.}{2017}]{canayetal2014}
Canay, Ivan~A., Joseph~P. Romano, and Azeem~M. Shaikh (2017).
\newblock Randomization tests under an approximate symmetry assumption.
\newblock {\em Econometrica\/}~{\em 85:3}, 1013--1030.

\bibitem[\protect\citeauthoryear{Canay, Santos, and Shaikh}{Canay
  et~al.}{2021}]{canayetal2018}
Canay, Ivan~A., Andres Santos, and Azeem~M. Shaikh (2021).
\newblock The wild bootstrap with a ``small'' number of ``large'' clusters.
\newblock {\em Review of Economics and Statistics\/}~{\em 103:2}, 346--363.

\bibitem[\protect\citeauthoryear{Conley and Taber}{Conley and
  Taber}{2011}]{conleytaber2011}
Conley, Timothy~G. and Christopher~R. Taber (2011).
\newblock Inference with ``difference in differences'' with a small number of
  policy changes.
\newblock {\em Review of Economics and Statistics\/}~{\em 93:1}, 113--125.

\bibitem[\protect\citeauthoryear{de~Chaisemartin and
  D'Haultf{\oe}ille}{de~Chaisemartin and
  D'Haultf{\oe}ille}{2020}]{chaisemartindhaultfoeille2020}
de~Chaisemartin, Clement and Xavier D'Haultf{\oe}ille (2020).
\newblock Two-way fixed effects estimators with heterogeneous treatment
  effects.
\newblock {\em American Economic Review\/}~{\em 110:9}, 2964--2996.

\bibitem[\protect\citeauthoryear{Djogbenou, MacKinnon, and Nielsen}{Djogbenou
  et~al.}{2019}]{djogbenouetal2019}
Djogbenou, Antoine, James~G. MacKinnon, and Morten Nielsen (2019).
\newblock Asymptotic theory and wild bootstrap inference with clustered errors.
\newblock {\em Journal of Econometrics\/}~{\em 212:2}, 393--412.

\bibitem[\protect\citeauthoryear{Donald and Lang}{Donald and
  Lang}{2007}]{donaldlang2007}
Donald, Stephen~G. and Kevin Lang (2007).
\newblock Inference with difference-in-differences and other panel data.
\newblock {\em Review of Economics and Statistics\/}~{\em 89:2}, 221--233.

\bibitem[\protect\citeauthoryear{{El Machkouri}, Voln{\'y}, and Wu}{{El
  Machkouri} et~al.}{2013}]{machkouriaetal2013}
{El Machkouri}, Mohamed, Dalibor Voln{\'y}, and Wei~Biao Wu (2013).
\newblock A central limit theorem for stationary random fields.
\newblock {\em Stochastic Processes and their Applications\/}~{\em 123:1},
  1--14.

\bibitem[\protect\citeauthoryear{Fama and MacBeth}{Fama and
  MacBeth}{1973}]{famamacbeth1973}
Fama, Eugene~F. and James~D. MacBeth (1973).
\newblock Risk, return, and equilibrium: Empirical tests.
\newblock {\em Journal of Political Economy\/}~{\em 81:3}, 607--636.

\bibitem[\protect\citeauthoryear{Gneiting}{Gneiting}{1997}]{gneiting1997}
Gneiting, Tilman (1997).
\newblock Normal scale mixtures and dual probability densities.
\newblock {\em Journal of Statistical Computation and Simulation\/}~{\em 59:4},
  375--384.

\bibitem[\protect\citeauthoryear{Hagemann}{Hagemann}{2019}]{hagemann2019}
Hagemann, Andreas (2019).
\newblock Placebo inference on treatment effects when the number of clusters is
  small.
\newblock {\em Journal of Econometrics\/}~{\em 213:1}, 190--209.

\bibitem[\protect\citeauthoryear{Hoeffding}{Hoeffding}{1952}]{hoeffding1952}
Hoeffding, Wassily (1952).
\newblock The large-sample power of tests based on permutations of
  observations.
\newblock {\em Annals of Mathematical Statistics\/}~{\em 23:2}, 169--192.

\bibitem[\protect\citeauthoryear{Ibragimov and M{\"u}ller}{Ibragimov and
  M{\"u}ller}{2010}]{ibragimovmueller2010}
Ibragimov, Rustam and Ulrich M{\"u}ller (2010).
\newblock {\itshape t}-statistic based correlation and heterogeneity robust
  inference.
\newblock {\em Journal of Business \& Economic Statistics\/}~{\em 28:4},
  453--468.

\bibitem[\protect\citeauthoryear{Ibragimov and M{\"u}ller}{Ibragimov and
  M{\"u}ller}{2016}]{ibragimovmueller2016}
Ibragimov, Rustam and Ulrich M{\"u}ller (2016).
\newblock Inference with few heterogenous clusters.
\newblock {\em Review of Economics and Statistics\/}~{\em 98:1}, 83--06.

\bibitem[\protect\citeauthoryear{Jenish and Prucha}{Jenish and
  Prucha}{2009}]{jenischprucha2009}
Jenish, Nazgul and Ingmar~R. Prucha (2009).
\newblock Central limit theorems and uniform laws of large numbers for arrays
  of random fields.
\newblock {\em Journal of Econometrics\/}~{\em 150:1}, 86–98.

\bibitem[\protect\citeauthoryear{MacKinnon and Webb}{MacKinnon and
  Webb}{2017}]{mackinnonwebb2014}
MacKinnon, James~G. and Matthew~D. Webb (2017).
\newblock Wild bootstrap inference for wildly different cluster sizes.
\newblock {\em Journal of Applied Econometrics\/}~{\em 32:2}, 233--254.

\bibitem[\protect\citeauthoryear{Roth, Sant’Anna, Bilinski, and Poe}{Roth
  et~al.}{2022}]{rothetal2022}
Roth, Jonathan, Pedro Sant’Anna, Alissa Bilinski, and John Poe (2022).
\newblock What’s trending in difference-in-differences? {A} synthesis of the
  recent econometrics literature.
\newblock Working paper, \texttt{arXiv:2201.01194}.

\bibitem[\protect\citeauthoryear{Sz\'ekely}{Sz\'ekely}{2006}]{szekely2006}
Sz\'ekely, Gabor~J. (2006).
\newblock Students {\emph{t}}-test for scale mixture errors.
\newblock {\em IMS Lecture Notes -- Monograph Series\/}~{\em 49}, 9--15.

\bibitem[\protect\citeauthoryear{White}{White}{2001}]{white2001}
White, Halbert (2001).
\newblock {\em Asymptotic Theory for Econometricians\/} (revised ed.).
\newblock Academic Press, San Diego.

\end{thebibliography}


\clearpage

\appendix
\setcounter{page}{1}
\renewcommand{\thepage}{\arabic{page} (Online Appendix)}
\renewcommand*{\thefootnote}{\fnsymbol{footnote}}
\linespread{1.3}\selectfont
\section*{\bfseries\MakeUppercase{Online Supplemental appendix to\\ ``Permutation inference with a finite\\ number of heterogeneous clusters''}\protect\footnote{Andreas Hagemann, University of Michigan.}}
\vskip 1em
\thispagestyle{empty}

This supplemental appendix is organized as follows: Appendix~\ref{s:moretheory} presents additional theoretical results, some of which are of potentially independent interest. Appendix~\ref{s:moreexamples} provides a step-by-step procedure for implementing the adjusted permutation test and applies that procedure in several examples. Appendix~\ref{s:morenumerical} contains additional numerical results and comparisons with the test of \citet{ibragimovmueller2016}. Appendix~\ref{s:proofs} contains proofs. Appendix~\ref{s:balpha} presents a simple algorithm for simulating critical values beyond those found in Table \ref{tb:balphavals} in the main text.

\section{Additional theoretical results}\label{s:moretheory}
I start with a discussion of the behavior of the test under the alternative $H_1\colon \mu_1 > \mu_0$. (Tests in the other direction follow by considering $-X$ instead of $X$.) Let $\delta = \mu_1-\mu_0$ and denote by $\Phi$ the standard normal distribution function. The distribution function  of $\max_{1\leq k\leq q_0}X_{q_1 + k}$ is equal to $x\mapsto \prod_{1\leq k\leq q_0}\Phi(x/\sigma_{k+q_1})=: \tilde{\Phi}(x)$ and therefore has a continuous and strictly increasing inverse. The following result gives a simple lower bound on the power of a permutation test as a function of $\delta$, $\Phi$, $\tilde{\Phi}$, and the standard deviations in the treatment group. Here I assume that the $\alpha$ under consideration is feasible, i.e., the corresponding $\balpha$ satisfies $\lceil (1-\balpha)|\Sym|\rceil < |\Sym|$ or, equivalently, $\balpha \geq 1/|\Sym|$. Otherwise the test becomes trivial because the null is never rejected.
\begin{theorem}[Power]\label{t:bfpower}
Suppose $X = (X_1,\dots, X_q)$ with independent $X_k\sim N(\mu + \delta 1\{k \leq q_1\}, \sigma^2_k)$,  $1\leq k\leq q$. Let $\balpha \geq 1/|\Sym|$. Then, for every $\sigma_1,\dots,\sigma_q>0$, \[\inf_{\mu\in\mathbb{R}}\prob \bigl(T(X) > T^{\balpha}(X,\Sym)\bigr)  \geq \int_0^1 \prod_{1 \leq j \leq  q_1} \Phi \Biggl( \frac{\delta - \tilde{\Phi}^{-1}(t)}{\sigma_j} \Biggr)dt.\]
\end{theorem}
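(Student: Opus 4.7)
The plan is to lower-bound the power by restricting attention to the event on which $T(X)$ is the strict maximum of $\{T(gX) : g \in \Sym\}$, and then further restrict to the event that every treatment observation exceeds every control observation, whose probability is the stated integral. Since $T$ is a difference of sample means, $T(X)$ and every $T(gX)$ are invariant under adding the same constant to each coordinate of $X$, so $\prob(T(X) > T^{\balpha}(X,\Sym))$ does not depend on $\mu$; I fix $\mu = 0$ throughout.

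The assumption $\balpha \geq 1/|\Sym|$ gives $\lceil(1-\balpha)|\Sym|\rceil \leq |\Sym|-1$, so $T^{\balpha}(X,\Sym) \leq T^{(|\Sym|-1)}(X,\Sym)$. Hence whenever $T(X) > T(gX)$ for every $g \in \Sym$ other than the identity $e$, the test rejects. To establish this strict maximum property I would derive the identity
\[ T(X) - T(gX) = \left(\frac{1}{q_1} + \frac{1}{q_0}\right)\left(\sum_{i \in A_g} X_i - \sum_{i \in B_g} X_i\right), \]
where $A_g = \{1,\dots,q_1\} \setminus \{g(1),\dots,g(q_1)\}$ and $B_g = \{g(1),\dots,g(q_1)\} \setminus \{1,\dots,q_1\}$. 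These sets have equal cardinality, $A_g \subseteq \{1,\dots,q_1\}$, $B_g \subseteq \{q_1+1,\dots,q\}$, and both are empty iff $g = e$; the identity follows from elementary set algebra applied to the definition of $T$.

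On the event $E = \{\min_{1 \leq j \leq q_1} X_j > \max_{1 \leq k \leq q_0} X_{q_1+k}\}$, every summand in $\sum_{A_g} X_i$ strictly exceeds every summand in $\sum_{B_g} X_i$. Because $|A_g| = |B_g|$, pairing the terms shows the difference displayed above is strictly positive for all $g \neq e$, so $T(X)$ is the strict maximum of $\{T(gX) : g \in \Sym\}$. Therefore $\prob(T(X) > T^{\balpha}(X,\Sym)) \geq \prob(E)$.

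To finish I would compute $\prob(E)$ by conditioning on $W = \max_{1 \leq k \leq q_0} X_{q_1+k}$. By independence, $W$ has distribution function $F_0$; and conditionally on $W = u$ the $X_j$ with $j \leq q_1$ remain independent $N(\delta,\sigma_j^2)$, giving $\prob(E \mid W = u) = \prod_{j \leq q_1} \Phi((\delta - u)/\sigma_j)$. Integrating against $dF_0$ and substituting $t = F_0(u)$, which is valid since $F_0$ is continuous and strictly increasing on $\mathbb{R}$ with range $(0,1)$, produces the stated integral. The main obstacle is the combinatorial derivation of the displayed permutation identity; once that is in hand, the sign argument and the change of variables are routine.
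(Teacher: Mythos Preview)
Your proposal is correct and follows essentially the same route as the paper: both arguments lower-bound the rejection probability by $\prob(\min_{j\leq q_1} X_j > \max_{k\leq q_0} X_{q_1+k})$ and then evaluate that probability as the stated integral. The paper invokes the order-statistic identity from its size proof to equate this event with $T(X)=T^{(|\Sym|)}(X,\Sym)$ and uses the tail-expectation formula $\ev F_0(Z)=\int_0^1\prob(F_0(Z)>t)\,dt$, whereas you give a self-contained derivation via the difference identity $T(X)-T(gX)=(q_1^{-1}+q_0^{-1})(\sum_{A_g}X_i-\sum_{B_g}X_i)$ and condition on $W$ directly; these are equivalent computations.
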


As can be expected, the power of the test is driven by the strength of the signal $\delta$ relative to the noise represented by the standard deviations $\sigma_1,\dots, \sigma_q$. For example, a small treatment effect $\delta$ can be drowned out by large variation in the control group because $t\mapsto \tilde{\Phi}^{-1}(t)$ will then be positive and large for most values of $t$. However, the power of the test is not inherently limited. The integrand on the right is bounded by $1$ and converges to $1$ as $\delta \to \infty$ pointwise for every $t$. The integral and consequently the power of the permutation test therefore approach $1$ by dominated convergence as $\delta \to \infty$. Both the bound and this result can be generalized to the symmetric scale mixtures from Corollary \ref{t:symscalebound}; see Lemma \ref{l:symscalepower} for details.

Next, I discuss several aspects of the practical implementation of the permutation test \eqref{eq:adjustedtestdec}. First, one can still perform an asymptotic $\alpha$-level test if the observed data or statistic $X_n$ converges in distribution to the $X$ considered in Theorem~\ref{t:bfbound} or Corollary~\ref{t:symscalebound}. The reason is that the $g$ that order $T(gX_n)$ and $T(gX)$ as $g$ varies over $\Sym$ eventually coincide if sufficiently many entries of $X$ are smooth. The proof is a consequence of arguments in \citet{canayetal2014}. 
\begin{proposition}[Large sample approximation]\label{p:controlasy}
Let $X_n \leadsto X\in\mathbb{R}^q$ and let $T$ be as in \eqref{eq:compmean}. If $X$ has independent entries of which more than $q_1\wedge q_0$ are continuously distributed, then \[\lim_{n\to\infty} \prob \bigl( T(X_n) >  T^{(j)}(X_n,\Sym)\bigr) = \prob \bigl( T(X) >  T^{(j)}(X,\Sym)\bigr), \qquad \text{every~} 1\leq j\leq |\Sym|.\]
\end{proposition} 

Second, if evaluating $T(gX)$ over all elements of $\Sym$ is too costly because $|\Sym| = {q \choose q_1}$ is large, the computational burden can be reduced by working with a random sample $\Sym_m$ of $m$ draws from the uniform distribution on $\Sym$. This is often referred to as ``stochastic approximation.'' The following result shows that the critical values $T^{p}(X,\Sym_m)$ and $T^{p}(X,\Sym)$ lead to identical test decisions for any $p$ and large $m$ as long as $p|\Sym|$ is not an integer. If $p|\Sym|$ \emph{is} in fact an integer, the stochastic approximation can be marginally more conservative. The reason is that $p \mapsto T^{p}(X,\Sym)$ can vary discontinuously at integer values of $p|\Sym|$. The stochastic approximation then hits the order statistic just above $T^{p}(X,\Sym)$ with nonzero probability. The same arguments apply if the identity transformation is always included in $\Sym_m$, which is common practice for randomization tests.
\begin{proposition}[Stochastic approximation]\label{p:stochapprox}
Let $X_n\in\mathbb{R}^q$ be an arbitrary random vector possibly depending on $n$. Suppose $\Sym_m$ is a collection of $m$ random draws from $\Sym$ independent of $X_n$. Then \[\lim_{m\to\infty} \prob \bigl( T(X_n) >  T^{p}(X_n,\Sym_m)\bigr) \leq \prob \bigl( T(X_n) >  T^{p}(X_n,\Sym)\bigr), \qquad \text{every~} p\in (0,1),\] with equality unless $p|\Sym|\in\mathbb{N}$. The result remains true if one of the members of $\Sym_m$ is replaced by the identity with probability one.
\end{proposition}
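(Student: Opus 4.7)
The plan is to use the equivalence from \eqref{eq:pval}, namely $T(X_n) > T^{p}(X_n,\Sym) \iff \hat{p}(X_n,\Sym) \leq p$, to translate the proposition into a statement about empirical $p$-values (and likewise with $\Sym_m$ in place of $\Sym$). Let $\pi_n := \hat{p}(X_n,\Sym)$, a function of $X_n$ taking values in the lattice $\{k/|\Sym|: 0\leq k\leq |\Sym|\}$. Because $\Sym_m$ consists of $m$ i.i.d.\ uniform draws from $\Sym$ that are independent of $X_n$, conditional on $X_n$ the quantity $m\hat{p}(X_n,\Sym_m) = \sum_{i=1}^m 1\{T(G_i X_n)\geq T(X_n)\}$ is Binomial$(m,\pi_n)$-distributed.

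The first step is to determine the pointwise conditional limit of $\prob(\hat{p}(X_n,\Sym_m)\leq p\mid X_n)$ as $m\to\infty$. On the event $\{\pi_n<p\}$ the strong law of large numbers yields the limit $1$; on $\{\pi_n>p\}$ it yields the limit $0$; and on $\{\pi_n=p\}$ the binomial central limit theorem together with $|\lfloor mp\rfloor-mp|/\sqrt{mp(1-p)}\to 0$ yields $\Phi(0)=1/2$. Since each conditional probability is bounded by $1$, dominated convergence gives
\[
\lim_m \prob\bigl(\hat{p}(X_n,\Sym_m)\leq p\bigr) \;=\; \prob(\pi_n<p) + \tfrac{1}{2}\prob(\pi_n=p) \;\leq\; \prob(\pi_n\leq p),
\]
which is the asserted inequality. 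Since $\pi_n$ is supported on $\{k/|\Sym|: 0 \leq k \leq |\Sym|\}$, the event $\{\pi_n=p\}$ has probability zero whenever $p|\Sym|\notin\mathbb{N}$, so the inequality becomes an equality in that case.

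For the variant in which one member of $\Sym_m$ is replaced by the identity, write $\Sym_m^* = \{\id\}\cup\Sym_{m-1}$ to obtain $\hat{p}(X_n,\Sym_m^*) = 1/m + ((m-1)/m)\,\hat{p}(X_n,\Sym_{m-1})$. The additive and multiplicative corrections are both of order $1/m$, so the same SLLN/CLT case analysis applies, the pointwise conditional limits are unchanged, and the conclusion follows verbatim. The main technical point is the conditional limit on $\{\pi_n=p\}$, which relies on the binomial local CLT; for the inequality itself any limit in $[0,1]$ would suffice, and the equality claim only requires that no atom be placed at $p$ when $p|\Sym|\notin\mathbb{N}$.
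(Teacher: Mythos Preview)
Your proof is correct and follows essentially the same route as the paper: translate to $p$-values, note that $m\hat{p}(X_n,\Sym_m)$ is conditionally Binomial$(m,\pi_n)$, do the three-case limit analysis (yielding $1$, $0$, and $1/2$), apply dominated convergence, and handle the identity-included variant by an $O(1/m)$ correction. The only cosmetic difference is that the paper phrases all three cases through the CLT for $\sqrt{m}(\hat{p}(X,\Sym_m)-\hat{p}(X,\Sym))$ rather than invoking the SLLN for $\pi_n\neq p$; your version is if anything slightly cleaner.
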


As a referee points out, the choice of $m$ is important in practice. In particular, it seems if $|\Sym|$ is large, then $m$ must be large as well to provide an accurate stochastic approximation of the test decision. However, this is only true if the $p$-value $\hat{p}(X, \Sym_m)$, as defined in \eqref{eq:pval}, is very close to $\balpha$. If $\hat{p}(X, \Sym_m)$ is much larger than $\balpha$ for a given $m$, there is often enough information to conclude that $\hat{p}(X, \Sym)$ is highly unlikely to be smaller than $\balpha$. The same is true if the direction of the inequalities is reversed. The reason is that $\ev(\hat{p}(X, \Sym_m)\mid X) = \hat{p}(X, \Sym)$ and, for almost every realization of $X$, the central limit theorem implies that $\sqrt{m}(\hat{p}(X, \Sym_m) - \hat{p}(X, \Sym))$ converges to mean-zero normal with variance $\hat{p}(X, \Sym)(1-\hat{p}(X, \Sym))$. It is therefore easy to test hypotheses of the form $\hat{p}(X, \Sym) \geq \balpha$ or $\hat{p}(X, \Sym) \leq \balpha$ with a very small error tolerance $\beta$. For example, if $\hat{p}(X, \Sym_{m}) > \balpha$ for a given $m$, one can check whether $\hat{p}(X, \Sym) \leq \balpha$ can be rejected at this $m$. If not, one can add draws from $\Sym$ until the decision becomes possible. This idea is, in fact, the basis for the widely-used algorithm of \citetAppendix{davidsonmackinnon2000} for determining a sufficient number of bootstrap repetitions in models where the bootstrap is expensive to compute. Their algorithm can be adapted to the present problem with only notational changes.

\begin{algorithm}[Choosing $m$ if $|\Sym|$ is very large] Choose a starting value $m$ (e.g., 10,000), a step size $m'$ (e.g., 1,000), a maximal number of permutations $m_{\max}$ (e.g., 100,000), and an error tolerance $\beta$ (e.g., $.001$).
\begin{enumerate}
\item If  $\hat{p}(X, \Sym_{m}) < \balpha$, test the null hypothesis $\hat{p}(X, \Sym) \geq \balpha$ by rejecting in favor of $\hat{p}(X, \Sym) < \balpha$ if $\sqrt{m}(\hat{p}(X, \Sym_{m}) - \balpha)/\sqrt{\balpha(1-\balpha)} < \Phi^{-1}(\beta)$. Stop if the null is rejected and use $\hat{p}(X, \Sym_{m})$ as if it were $\hat{p}(X, \Sym)$.
\item If  $\hat{p}(X, \Sym_{m}) > \balpha$, test the null hypothesis $\hat{p}(X, \Sym) \leq \balpha$ by rejecting in favor of $\hat{p}(X, \Sym) > \balpha$ if $\sqrt{m}(\hat{p}(X, \Sym_{m}) - \balpha)/\sqrt{\balpha(1-\balpha)} > \Phi^{-1}(1-\beta)$. Stop if the null is rejected and use $\hat{p}(X, \Sym_{m})$ as if it were $\hat{p}(X, \Sym)$.
\item Stop if $m + m' > m_{\max}$ and use $\hat{p}(X, \Sym_{m})$ as if it were $\hat{p}(X, \Sym)$. Otherwise draw $m'$ additional permutations from $\Sym$, set $m = m + m'$, and restart from step (1). 
\end{enumerate}
\end{algorithm}

Finally, the two approximation results in Propositions~\ref{p:controlasy} and \ref{p:stochapprox} can be combined with Theorem~\ref{t:bfbound} to obtain \[ \lim_{n\to\infty} \lim_{m\to\infty} \prob \bigl( T(X_n) >  T^{\balpha}(X_n,\Sym_m)\bigr) \leq \alpha, \] i.e., adjusted permutation inference with an asymptotically normally distributed vector with heterogeneous variances remains approximately valid even if the set of permutations is drawn at random. It should also be noted that Proposition \ref{p:stochapprox} is generic and can be restated for other statistics $T$ and finite groups with appropriate notational changes. Proposition \ref{p:controlasy} can be extended to other statistics and groups under smoothness conditions.

\section{Additional examples}\label{s:moreexamples}
\label{s:examples}
I first present a brief summary of how the permutation test can be implemented in practice. By Theorem \ref{t:behrensfisherasy}, the following procedure provides an asymptotically $\alpha$-level test in the presence of a finite number of large clusters that are arbitrarily heterogeneous. The test is free of nuisance parameters, does not require matching clusters or any other decisions on part of the researcher, can be two-sided or one-sided in either direction, and is able to detect all fixed and $1/\sqrt{n}$-local alternatives.

\begin{algorithm}[Permutation test adjusted for cluster heterogeneity]\item[]
\begin{enumerate}
	\item Order the data such that clusters $1\leq k\leq q_1$ received treatment and clusters $q_1+1 \leq k\leq q_1 + q_0 = q$ did not. Compute for each $k = 1,\dots, q$ and using only data from cluster $k$ an estimate $\hat{\theta}_{n,k}$ of either $\theta_1$ or $\theta_0$ depending on whether $k$ received treatment or not so that the difference $\theta_1-\theta_0$ is the treatment effect of interest. (Examples are provided below and in the main text.) Define $\hat{\theta}_n = (\hat{\theta}_{n,1},\dots, \hat{\theta}_{n,q})$ and compute $T(\hat{\theta}_n) = q_1^{-1}\sum_{k=1}^{q_1} \hat{\theta}_{n,k} - q_0^{-1} \sum_{k=q_1 + 1}^q \hat{\theta}_{n,k}$.
	\item For the desired $\alpha$, choose $\balpha$ from Table \ref{tb:balphavals}.
	\item Compute the set of permutations $\Sym$ defined in \eqref{eq:combinations}. Alternatively, draw a large random sample of permutations $\Sym_m$ and replace $\Sym$ by $\Sym_m$ in step \eqref{al:practice4}.
	\item\label{al:practice4} Reject the null hypothesis of no effect of treatment $H_0\colon \theta_1 = \theta_0$ against
	\begin{enumerate}
		\item $\theta_1 > \theta_0$ if $T(\hat{\theta}_n) >  T^{\balpha}(\hat{\theta}_n,\Sym)$ for a test with asymptotic level $\alpha$,
		\item $\theta_1 < \theta_0$ if $T(-\hat{\theta}_n) >  T^{\balpha}(-\hat{\theta}_n,\Sym)$ for a test with asymptotic level $\alpha$,
		\item $\theta_1 \neq \theta_0$ if $T(\hat{\theta}_n) >  T^{\balpha}(\hat{\theta}_n,\Sym)$ or $T(-\hat{\theta}_n) >  T^{\balpha}(-\hat{\theta}_n,\Sym)$ for a test with asymptotic level $2\alpha$,
	\end{enumerate}
	where $T^{\balpha}(\cdot, \G)$, defined in \eqref{eq:critval}, is the $\lceil (1-\balpha)|\G|\rceil$-th largest value of the permutation distribution of $T(\cdot)$.
\end{enumerate}
\end{algorithm}

I now discuss two additional examples of how the cluster-level statistics $\hat{\theta}_n$ can be constructed such that the condition \eqref{eq:jointconv} required for Theorem \ref{t:behrensfisherasy} holds. For simplicity, the discussion focuses on \eqref{eq:jointconv} under the null hypothesis $H_0 : \theta_1 = \theta_0$ but the arguments apply more broadly.
\begin{example}[Regression with cluster-level treatment]\label{ex:clusterreg}
Consider a linear regression model
\begin{equation*}
Y_{i,k} = \theta_0 + \delta D_{k} + \beta_k' X_{i,k} + U_{i,k},
\end{equation*}
where $i$ indexes individuals within clusters $1\leq k\leq q$. The parameter of interest is the coefficient $\delta$ on the treatment dummy $D_k$ indicating whether cluster $k$ received treatment or not. The regression also includes covariates $X_{i,k}$ that vary within each cluster and have coefficients $\beta_k$ that may vary across clusters. The condition $\ev(U_{i,k}\mid D_k, X_{i,k}) = 0$ identifies $\theta_1 = \theta_0 + \delta$ within a treated cluster and $\theta_0$ within an untreated cluster. The preceding display can then be written as
\[ 
Y_{i,k} = 
\begin{cases}
\theta_1 + \beta_k' X_{i,k} + U_{i,k}, &1\leq k\leq q_1,\\ 
\theta_0 + \beta_k' X_{i,k} + U_{i,k}, &q_1< k\leq q.
\end{cases}
\]
View these as $q$ separate regressions and use the least squares estimates of the constants $\theta_1$ and $\theta_0$ as $\hat{\theta}_n = (\hat{\theta}_{n,1},\dots, \hat{\theta}_{n,q})$. Also note that permuting $\hat{\theta}_n$ is identical to permuting the vector of the observed treatment indicators that labels each of these $q$ regressions as coming from either a treated or an untreated cluster. The same types of arguments as in Example \ref{ex:diffindiff} can be used to establish a central limit theorem for $\hat{\theta}_n$.

Under suitable conditions, the $\delta$ in this example can be interpreted as an average treatment effect in a potential outcomes framework. See, e.g., \citetAppendix{sloczynski2018} and references therein for a precise discussion. The goal here is to make permutation inference about $\delta$. This should not be confused with testing the ``sharp'' null hypothesis that the treatment and control potential outcomes under the intervention are identical. Testing sharp nulls is often associated with permutation testing and is a much stronger restriction than that the \emph{average} effect $\delta$ on the outcomes be zero. \citetAppendix{rosenbaum1984} explains how to use permutation inference to test sharp nulls in the presence of covariates under assumptions on the propensity score. \sqed
\end{example}

\begin{example}[Binary choice with cluster-level treatment]\label{ex:binreg} Consider a version of the model in Example \ref{ex:clusterreg} as the latent model $Y_{i,k} = \theta_0 + \delta D_{k} + \beta_0' X_{i,k} + U_{i,k}$ in a binary choice setting. Here $U_{i,k}$ has a known, smooth, and symmetric distribution function $F$ and is independent of $(D_{k},X_{i,k})$. Only $1\{Y_{i,k} > 0\}$, $X_{i,k}$, and $D_k$ are observed. Each cluster has $n_k$ observations and can be viewed as a separate binary choice model 
\[ 
P(Y_{i,k} > 0 \mid  X_{i,k}) = 
\begin{cases}
F(\theta_1 + \beta_0' X_{i,k}), &1\leq k\leq q_1,\\ 
F(\theta_0 + \beta_0' X_{i,k}), &q_1< k\leq q.
\end{cases}
\]
 If the treatment effect of interest is $F(\theta_1 + \beta_0' x) - F(\theta_0 + \beta_0' x)$ for some $x$, then $H_0\colon \theta_1 = \theta_0$ corresponds to the null hypothesis of no treatment effect. Let $\psi_{\theta, \beta}(y, x) = (1, x')'(1\{ y > 0 \} - F(\theta + \beta'x) )$ and suppose the moment condition $\ev \psi_{\theta_0, \beta_0}(Y_{i,k}, X_{i,k}) = 0$ holds for every $i$ and $k$. The corresponding $Z$-estimates $(\hat{\theta}_{n,k}, \hat{\beta}_{n,k}')'$ for the $k$-th cluster are zeros of
$\Psi_{n, k}(\theta, \beta) = n_k^{-1}\sum_{i=1}^{n_k} \psi_{\theta, \beta}(Y_{i,k}, X_{i,k}).$
Denote the derivative of $\Psi_{n, k}$ with respect to $(\theta, \beta')$ by $\dot{\Psi}_{n,k}$.

Using the same limit theory as outlined in Example~\ref{ex:diffindiffcont}, it is possible to argue under regularity conditions that $\dot{\Psi}_{n,k}$ converges pointwise in probability to a limit $\dot{\Psi}_{k}$ and $(\hat{\theta}_{n,k}, \hat{\beta}_{n,k})\pto (\theta_0,\beta_0)$. If $\dot{\Psi}_k(\theta_0, \beta_0)$ is non-singular and $\sqrt{n} \Psi_{n, k}(\theta_0, \beta_0) = O_P(1)$, then
\[ \sqrt{n}(\hat{\theta}_{n,k} - \theta_0) = e_1' \dot{\Psi}_k(\theta_0, \beta_0)^{-1} \sqrt{n} \Psi_{n, k}(\theta_0, \beta_0) + o_P(1), \] where $e_1$ is a conformable vector with a $1$ in the first position and $0$ otherwise. Condition \eqref{eq:jointconv} is satisfied if a central limit theorem applies to $\sqrt{n} \Psi_{n, k}(\theta_0, \beta_0)$. Because this is a scaled average of mean-zero random vectors, the same references as in Example~\ref{ex:diffindiffcont} can be used to establish a central limit theorem.  \sqed
\end{example}

\section{Additional numerical results}\label{s:morenumerical}
This section presents a detailed comparison of the \citet{ibragimovmueller2016} and adjusted permutation tests in Monte Carlo experiments and empirical examples.
\begin{example}[Equality of means]\label{ex:nloc}
The adjusted permutation test developed here and the \citet{ibragimovmueller2016} test both rely on results about the behavior of heterogeneous normal variables applied to certain test statistics. For the adjusted permutation test, this statistic is the comparison on means $T$. For the \citetalias{ibragimovmueller2016} test, it is the studentized two-sample statistic \[ \frac{\bar{X}_1 - \bar{X}_0}{\sqrt{\frac{1}{q_1(q_1-1)}\sum_{k=1}^{q_1}(X_k - \bar{X}_1)^2 + \frac{1}{q_0(q_0-1)}\sum_{k=q_1 + 1}^{q}(X_k - \bar{X}_0)^2}}, \] where $\bar{X}_1 = q_1^{-1}\sum_{k=1}^{q_1}X_k$ and $\bar{X}_0 = q_0^{-1}\sum_{k=q_1 + 1}^{q}X_k$. This statistic is compared to the quantiles of the Student $t$ distribution with $(q_1\wedge q_0)-1$ degrees of freedom. This example investigates the relative performance of the two tests. 

As in Section \ref{s:behrensfisher}, suppose $X = (X_1,\dots, X_q)\in\mathbb{R}^q$ has independent entries $X_k = \mu_0 + (\mu_1 - \mu_0)1\{k\leq q_1\} + \sigma_k Z_k$ with $Z_k$ distributed as $N(0, 1)$. The results reported here use $\mu_0 = 0$. To investigate the impact of heterogeneity on the two tests, I considered the following six configurations of $\sigma_1,\dots,\sigma_q$:
\begin{enumerate}[(a)]
\item $\sigma_1, \dots, \sigma_q = 1$,
\item $\sigma_1, \dots,\sigma_{q-1} = 1$, $\sigma_q = 100$
\item $\sigma_1, \dots, \sigma_{q_1-1} = 1$, $\sigma_{q_1} = 100$, $\sigma_{q_1+1}, \dots, \sigma_{q-1} = 1$, $\sigma_q = 100$,
\item $\sigma_1, \dots, \sigma_{q_1} = 1$, $\sigma_{q_1+1}, \dots, \sigma_{q} = 3$
\item $\sigma_1, \dots, \sigma_{q_1/2} = 3$, $\sigma_{q_1/2+1}, \dots, \sigma_{q_1 + q_0/2} = 1$, $\sigma_{q_1 + q_0/2+1}, \dots, \sigma_{q} = 3$, 
\item $\sigma_1, \dots, \sigma_{q_1/2} = 1$, $\sigma_{q_1/2+1}, \dots, \sigma_{q_1 + q_0/2} = 3$, $\sigma_{q_1 + q_0/2+1}, \dots, \sigma_{q} = 9$.
\end{enumerate}
Configurations (a), (d), (e), and (f) are taken from \citet{ibragimovmueller2016}.

\begin{figure}[htp]
\centering
\includegraphics[width=.85\textwidth]{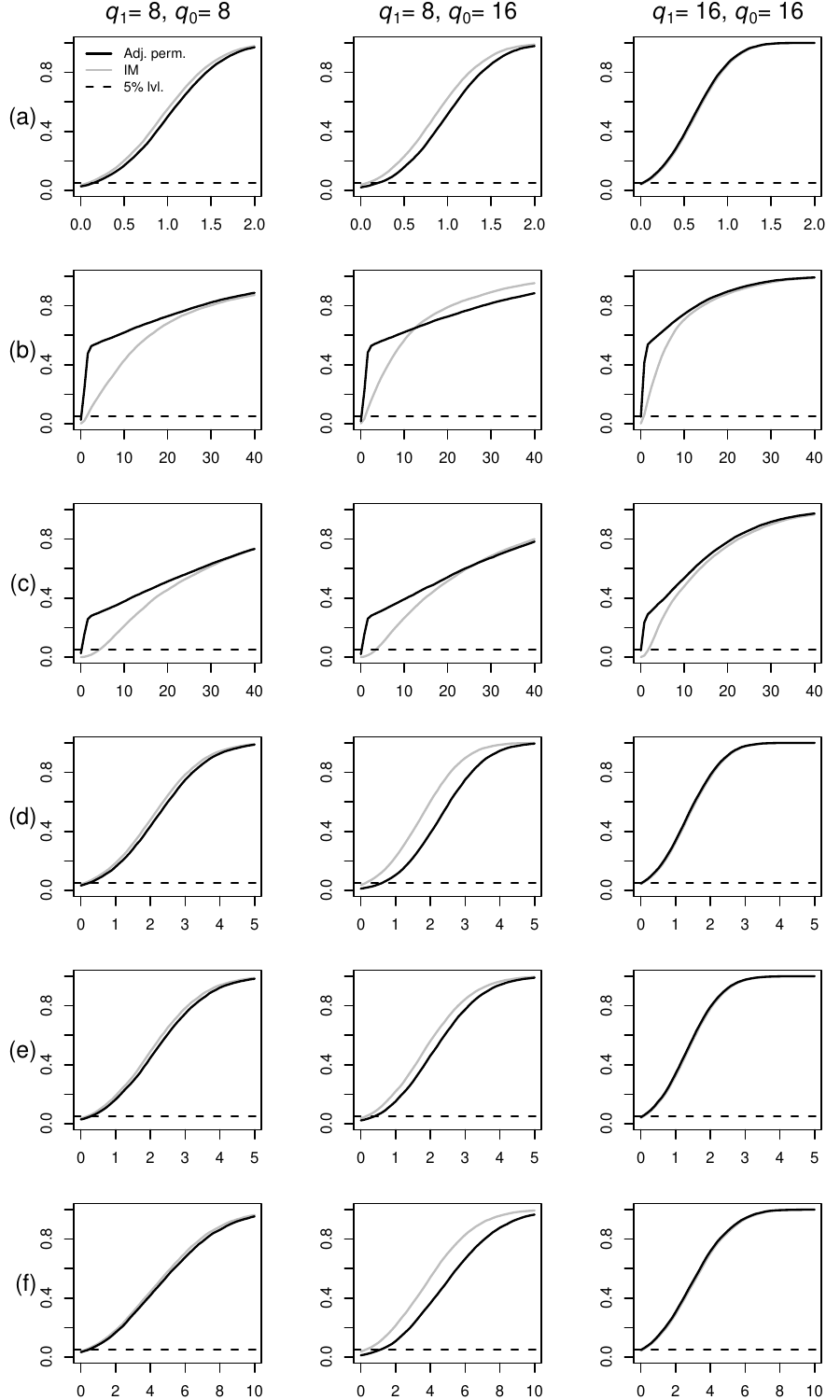}
\caption{Rejection frequencies of the adjusted permutation test (black lines) and the \citetalias{ibragimovmueller2016} test (IM, grey) for models (a)-(f) (rows) in Example~\ref{ex:nloc} for $q_1 = q_0 = 8$ (left), $q_1 = 8, q_0 = 16$ (middle), and $q_1 = q_0 = 16$ (right).}  \label{fig:norm_loc}
\end{figure}

Rows (a)-(f) of Figure~\ref{fig:norm_loc} correspond to the six configurations (a)-(f) and show the rejection frequencies of the adjusted permutation test (black lines) and the \citetalias{ibragimovmueller2016} test (grey) at the 5\% level (dashed line) as $\mu_1$ increases. The null hypothesis is correct at $\mu_1 = 0$. The columns correspond, from left to right, to the sample sizes $(q_1=8, q_0=8)$, $(q_1=8, q_0=16)$, and $(q_1=16, q_0=16)$. Each horizontal coordiate was computed from 10,000 Monte Carlo replications. As can be seen, the variation in $\sigma_k$ led to marked differences in power at different levels of heterogeneity. The adjusted permutation test was able to reject far more false nulls than the \citetalias{ibragimovmueller2016} test for small $\mu_1$ when there were few large variances as in (b) and (c). For instance, in (b) with $(q_1=8, q_0=8)$ at $\mu_1 = 1$ the adjusted permutation test rejected in 47.62\% of all cases whereas the \citetalias{ibragimovmueller2016} test rejected in only 6.36\% of all cases. This difference eventually disappeared for large $\mu_1$. However, neither test is more powerful. With slightly different variances within or across groups as in (d) and (f), the \citetalias{ibragimovmueller2016} test had an advantage when the sample sizes differed substantially. The differences between the two tests were much smaller for the other configurations. Other samples sizes (not shown) led to qualitatively similar results. 

\begin{figure}[htp]
\centering
\includegraphics[width=.85\textwidth]{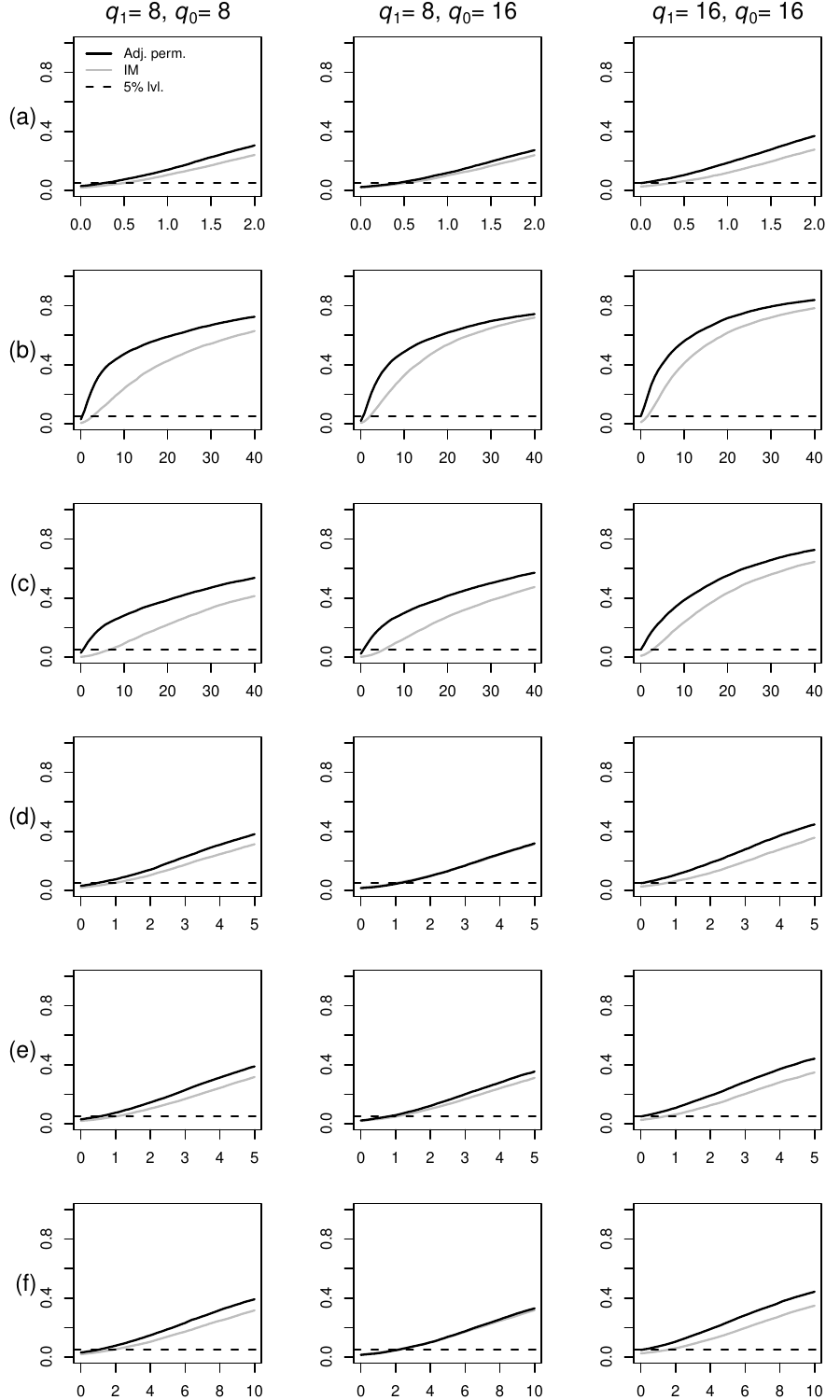}
\caption{Rejection frequencies of the adjusted permutation test (black lines) and the \citetalias{ibragimovmueller2016} test (IM, grey) as in Figure \ref{fig:norm_loc} but with Cauchy distributions.}  \label{fig:cauchy_loc}
\end{figure}

As a referee points out, it would be interesting to compare the performance of the adjusted permutation test and the \citetalias{ibragimovmueller2016} test in fat-tailed settings. Just like the adjusted permutation test, the \citetalias{ibragimovmueller2016} test can be used with mixtures of normals, which includes models with infinite variances. I therefore repeated the above experiments with standard Cauchy distributed $Z_k$ instead of standard normal distributions, holding all else equal. The results are plotted in Figure \ref{fig:cauchy_loc}. As can be seen, within the scope of the configurations for (a)-(f), the adjusted permutation test was more powerful than the \citetalias{ibragimovmueller2016} test for \emph{every} configuration at all sample sizes and for all values of $\mu_1$. In sharp contrast to the situation with standard normal $Z_k$, this was true even when the samples sizes differed.
\sqed 
\end{example}

A reviewer also recommends comparing the conclusions of adjusted permutation inference and the \citetalias{ibragimovmueller2016} test in empirical examples discussed in \citet{ibragimovmueller2016}, which include tests of hypotheses on January effects and a randomized trial of \citetAppendix{bloometal2013}. 
\begin{example}[January effects; \citealtAppendix{keim1983}]
\citetAppendix{keim1983} investigates January effects in stock returns. He considers excess returns in portfolios constructed from firms in the top and bottoms decile of size, as measured by market value of equity on the New York Stock Exchange (NYSE) and American Stock Exchange (now called NYSE American) over the period 1963-1979. To test whether the January effect is time invariant, \citeauthor{ibragimovmueller2016} assume that the data are suitably approximated by a scale mixture of normals and implement their test by comparing the January excess returns for 1963-1969 to the January excess returns for 1970-1979. They do not reject the null hypothesis of time invariance at the 5\% level but reject at the 10\% level. The adjusted permutation test does not reject at either significance level. \sqed
\end{example}

\begin{example}[Modern management practices; \citealtAppendix{bloometal2013}]
In this example, I reanalyze data form a randomized trial of \citetAppendix{bloometal2013}. Their intervention provided five months of extensive management consulting from a large international consulting firm to eleven randomly selected Indian textile plants. A control group of six randomly selected plants received only one month of diagnostic consulting. The experiment ran from 2008 to 2011 and several key performance measures were collected before, during, and after the intervention. These measures include data on quality defects, inventory, output, and total factor productivity. Here I focus on output because it is the only measure that has data for all 17 firms available. For the effect on output in their main results in their Table II, \citetAppendix{bloometal2013} run a regression of the log of picks (one pick is a single rotation of a weaving shuttle) on a treatment dummy, time fixed effects, and firm fixed effects. They find a 9\% increase in output as a result of the intervention. 

\citetAppendix{bloometal2013} use, among other methods, the \citet{ibragimovmueller2016} test to conduct inference. The adjusted permutation test also applies and can be computed as outlined in Examples \ref{ex:diffindiff} and \ref{ex:diffindiffcont}. Both the \citetalias{ibragimovmueller2016} and the adjusted permutation test find a significant positive effect on log output at the 5\% level, which confirms that the results of \citetAppendix{bloometal2013} remain valid even if methods designed for a small number of arbitrarily heterogeneous clusters are used.\sqed

\end{example}

\section{Proofs}\label{s:proofs}
\begin{proof}[Proof of Theorem \ref{t:bfbound} and Corollary \ref{t:symscalebound}]
 Denote the distribution function of an arbitrary random variable $Y$ by $F_Y$. We have $T(X) > T^{(|\Sym|-1)}(X,\Sym)$ if and only if $T(X) = T^{(|\Sym|)}(X,\Sym)$. Because the test statistic is location invariant, assume without loss of generality that $\mu = 0$. Denote by $X_{(1)}, X_{(2)}, \dots, X_{(q)}$ the order statistics of $X$. Then $T^{(|\Sym|)}(X,\Sym) = q_1^{-1}\sum_{k=1}^{q_1} X_{(k + q_0)} - q_0^{-1}\sum_{k=1}^{q_0} X_{(k)}$. Because $T(X) = T^{(|\Sym|)}(X,\Sym)$ and $\min\{X_1,\dots, X_{q_1}\} < \max\{X_{q_1+1}, \dots, X_q\}$ cannot be true at the same time and $\min\{X_1,\dots, X_{q_1}\} > \max\{X_{q_1+1}, \dots, X_q\}$ implies $T(X) = T^{(|\Sym|)}(X,\Sym)$, it follows that $\prob (T(X) = T^{(|\Sym|)}(X,\Sym))$ equals
\begin{align*}
&\prob\bigl(\min\{X_1,\dots, X_{q_1}\} > \max\{X_{q_1+1}, \dots, X_q\}\bigr)\\ &\qquad+ \prob\bigl(T(X) = T^{(|\Sym|)}(X,\Sym), \min\{X_1,\dots, X_{q_1}\} = \max\{X_{q_1+1}, \dots, X_q\}\bigr).
\end{align*}

Suppose $X_k = S_k Z_k$, $1\leq k\leq q$, where the $S_k$ is nonzero with probability one and the $Z_k$ has a continuous distribution. The second line of the preceding display must then be zero conditional on $S = (S_1\dots, S_q)$ and the same must therefore hold unconditionally.  The first line conditional on $S_1=\sigma_1, \dots, S_q=\sigma_q$ for fixed scales $\sigma_1,\dots, \sigma_q$ is, by independence, equivalent to the statement $\prob(\min\{X_1,\dots, X_{q_1}\} > \max\{X_{q_1+1}, \dots, X_q\}$) with $X_k = \sigma_k Z_k$ for $1\leq k\leq q$. In the following, I will therefore work with $X_k = \sigma_k Z_k$ first and return to the unconditional case later.	

Let $V = \max\{X_1,\dots, X_{q_1}\}$ and $W = \max\{X_{q_1+1}, \dots,$ $X_q\}$. Symmetry of $X_1,\dots, X_{q_1}$ and independence of $V$ and $W$ imply \[ \prob\bigl(\min\{X_1,\dots, X_{q_1}\} > W \bigr) = \prob\bigl(\min\{-X_1,\dots, -X_{q_1}\} > W\bigr) = \prob(V + W < 0 ).\] Suppose $q_1 < q_0$. The two maxima $V$ and $W$ must satisfy \[ \prob(V + W < 0 ) = \prob\Biggl(\bigcap_{k=1}^{q_1} \bigcap_{l=1}^{q_0} \{ X_k + X_{l + q_1} < 0 \} \Biggr) \leq  \prob\Biggl(\bigcap_{k=1}^{q_1} \{ X_k + X_{k+q_1} < 0 \} \Biggr). \] Define $Y_k = X_k + X_{k+q_1}$. Note that the $Y_k$ are independent across $1 \leq k\leq q_1$ and symmetric because $\prob(X_k + X_{k + q_1} \leq y) = \prob(-X_k -X_{k + q_1} \leq y) = \prob(-Y_k \leq y)$. The right-hand side of the preceding display then equals $\prob(\max\{Y_1,\dots, Y_{q_1}\} < 0) = F_Y(0)^{q_1}$. Conclude from symmetry that $\prob(V + W < 0 ) \leq 0.5^{q_1}$. Repeat the argument with $q_1 > q_0$ to obtain \[ \prob(V + W < 0 ) \leq \max\{0.5^{q_1}, 0.5^{q_0}\} = 2^{\min\{q_1,q_0 \}},\] as desired. To see that this bound is tight, assume first that $q_1 \geq q_0$. Choose $\sigma_1=\dots = \sigma_{q_1} = 1$,  $\sigma = \sigma_{q_1 + 1}= \dots = \sigma_{q}$, and let $U = \max\{ Z_{1+q_1}, \dots, Z_q \}$. Then $\prob(V+W < 0) = \ev F_V(-\sigma U)$. If $U > 0$, then $F_V(-\sigma U)\to 0$ almost surely as $\sigma \to\infty$. If $U < 0$, then $F_V(-\sigma U)\to 1$ almost surely as $\sigma \to\infty$. Conclude from dominated convergence that  $\ev F_V(-\sigma U) \to \prob(U < 0) = 0.5^{q_0}$. If $q_1 < q_0$, switch $V$ and $W$. This proves the theorem.

For the corollary, return to $X_k = S_k Z_k$ and redefine $V,W$ accordingly. It is still true that $\prob(V + W\mid S) \leq 1/2^{\min\{q_1,q_0\}}$ almost surely and therefore $\prob(T(X) > T^{(|\Sym|-1)}(X,\Sym)) \leq 1/2^{\min\{q_1,q_0\}}$, as required for the corollary.
\end{proof}

Define $V = \max\{ S_1 Z_1,\dots, S_{q_1}Z_{q_1} \}$ and $W = \max\{S_{q_1+1}Z_{q_1+1}, \dots,$ $S_q Z_q \}$. Let $w\mapsto F_W(w\mid S)$ be the distribution function of $W$ conditional on $S$.
\begin{lemma}\label{l:symscalepower}
	Suppose $X = (X_1,\dots, X_q)$ with $X_k = \mu + \delta 1\{k\leq q_1 \} + S_k Z_k$,  $1\leq k\leq q$, where the $Z_1,\dots, Z_q$ are iid copies of a random variable $Z$ with continuous distribution function and $S = (S_1, \dots, S_q$) is a random vector independent of $Z_1,\dots, Z_q$ with $P(S_k > 0) = 1$ for $1\leq k\leq q$. If $Z$ and $-Z$ have the same distribution, then  \[\inf_{\mu\in\mathbb{R}}\prob \bigl(T(X) > T^{\balpha}(X,\Sym)\bigr)  \geq \ev \int_0^1 \prod_{1 \leq j \leq  q_1} F_Z \Biggl( \frac{\delta - F^{-1}_W(t\mid S)}{S_j} \Biggr)dt.\] The right-hand side converges to $1$ as $\delta \to \infty$.
\end{lemma}

\begin{proof}[Proof of Lemma \ref{l:symscalepower}.]
This proof is similar to the proof of Theorem \ref{t:bfbound}. As before, consider $T(X) = T^{(|\Sym|)}(X,\Sym)$ and assume without loss of generality the case $\mu = 0$ so that $\min\{X_1,\dots,X_{q_0}\}$ has the same distribution as $\delta -V$. Because $T^{(|\Sym|)}(X,\Sym) = q_1^{-1}\sum_{k=1}^{q_1} X_{(k+q_0)} - q_0^{-1}\sum_{k=1}^{q_0} X_{(k)}$, continuity implies
\begin{align}\label{eq:symscalerep}
\prob \bigl(T(X) = T^{(|\Sym|)}(X,\Sym)\bigr)
= \prob(V + W < \delta).
\end{align}
Independence of $V$ and $W$ conditional on $S$ and continuity imply that there is an independent standard uniform $U$ such that the preceding display equals \[ \ev \prob\bigl( V < \delta - F_W^{-1}(U\mid S)\mid S\bigr) = \ev \int_0^1 \prob\bigl( V < \delta - F_W^{-1}(t\mid S) \mid S \bigr) dt, \] where the equality follows from Tonelli's theorem. By independence, distribution function of $V$ conditional on $S$ is $v\mapsto \prod_{1\leq j\leq q_1}F_Z(v/S_j)$. The first result now follows because $\prob (T(X) > T^{\balpha}(X,\Sym)) \geq \prob(T(X) = T^{(|\Sym|)}(X,\Sym)).$ The second result follows from \eqref{eq:symscalerep} as $\delta \to\infty$.
\end{proof}

\begin{proof}[Proof of Theorem \ref{t:bfpower}.]
This follows immediately from Lemma \ref{l:symscalepower} by letting $S = (\sigma_1,\dots, \sigma_q)$ with probability one and $F_Z = \Phi$.
\end{proof}

\begin{proof}[Proof of Proposition \ref{p:controlasy}]
Following \citet{canayetal2014}, I only have to show that for any two distinct $g, g'\in \Sym$, either $T(gx) = T(g'x)$ for all $x\in\mathbb{R}^q$ or $\prob(T(gX)\neq T(g'X)) = 1$. Let $w_{g(k)} = q_1^{-1} {1\{g(k) \leq q_1\}} - q_0^{-1} {1\{g(k) > q_1\}}$ and notice that $g \neq g'$ implies that $w_{g(k)}\neq w_{g'(k)}$ for at least two $k', k''\in\{1,\dots, q\}$. By the pigeonhole principle, $X_{k'}$ or $X_{k''}$ must be continuously distributed. Then $T(gX) - T(g'X) = \sum_{k=1}^q (w_{g(k)}- w_{g'(k)}) X_k$ is continuously distributed by independence and therefore $\prob(T(gX) - T(g'X) = 0) = 0$.
\end{proof}

\begin{proof}[Proof of Proposition \ref{p:stochapprox}] All limits are as $m\to\infty$. Let $\Sym_m = \{G_1,\dots, G_m\}$ be a collection of $m$ draws from the uniform distribution on $\Sym$, in which case $\ev(\hat{p}(X, \Sym_m)\mid X) = \hat{p}(X, \Sym)$. For almost every realization of $X$, the central limit theorem implies that $\sqrt{m}(\hat{p}(X, \Sym_m) - \hat{p}(X, \Sym))$ converges to mean-zero normal with variance $\hat{p}(X, \Sym)(1-\hat{p}(X, \Sym))$. Because $\hat{p}(X, \Sym)\geq 1/|\Sym|$, this variance can only be zero if $\hat{p}(X, \Sym) = 1$. This occurs if and only if $T(gX)\geq T(X)$ for all $g\in\Sym$, which also implies $\hat{p}(X, \Sym_m)=1$ for such $X$. 

By the equivalence of $p$-values and critical values, $T(X) >  T^{p}(X,\Sym_m)$ if and only if $\hat{p}(X, \Sym_m) \leq p$ and therefore \[ \prob \bigl( T(X) >  T^{p}(X,\Sym_m)| X\bigr) = \prob \Bigl( \sqrt{m}\bigl(\hat{p}(X, \Sym_m) - \hat{p}(X, \Sym)\bigr) \leq \sqrt{m}\bigl(p - \hat{p}(X, \Sym)\bigr)| X \Bigr). \] Since $\prob ( \sqrt{m}(p(X, \Sym_m) - p(X, \Sym)) \leq t\mid X )$ converges almost surely to a (possibly degenerate) normal distribution function, for every $\varepsilon > 0$ and almost every realization of $X$ there is an $M$ (possibly depending on $\varepsilon$ and $X$) such that the limit of $\prob ( \sqrt{m}(p(X, \Sym_m) - p(X, \Sym)) \leq -M\mid X )$ is at most $\varepsilon$ and $\prob ( \sqrt{m}(p(X, \Sym_m) - p(X, \Sym)) \leq M\mid X )$ is at least $1-\varepsilon$. If $p > p(X, \Sym)$, then $\sqrt{m}(p - p(X, \Sym))$ is eventually larger than every such $M$. If $p < p(X, \Sym)$, then $\sqrt{m}(p - p(X, \Sym))$ is eventually smaller than  $-M$. If $p = p(X, \Sym)$, which cannot occur if $p(X, \Sym) = 1$, the preceding display converges almost surely to 0.5. Conclude that the preceding display converges almost surely to  $1\{ p(X, \Sym) < p \}  + 1\{ p(X, \Sym) = p \}/2$. The dominated convergence theorem then implies \[\prob \bigl( T(X) >  T^{p}(X,\Sym_m)\bigr) \to \prob\bigl( \hat{p}(X, \Sym) < p \bigr) + 0.5\prob\bigl( \hat{p}(X, \Sym) = p \bigr). \] The right hand side is equal to $\prob( \hat{p}(X, \Sym) \leq p)$ if $\prob( \hat{p}(X, \Sym) = p) = 0$, which is the case if $p |\Sym|$ is not an integer because infinitesimal changes in $p$ cannot change $\prob( \hat{p}(X, \Sym) \leq p)$. If $\prob( \hat{p}(X, \Sym) = p)$ is nonzero, then the preceding display is smaller than $\prob( \hat{p}(X, \Sym) \leq p)$.

If $\Sym_m' = \{\id,G_2,\dots, G_m\}$ then, both unconditionally and conditional on $X$, \[\sqrt{m}\bigl(\hat{p}(X, \Sym_m') - \hat{p}(X, \Sym_m)\bigr) = \frac{1 - 1\{T(G_1 X)\geq T(X)\}}{\sqrt{m}} \pto 0.\] The proof now follows from the arguments for $\Sym_m$.
\end{proof}

\begin{proof}[Proof of Theorem \ref{t:behrensfisherasy}]
Suppose $\theta_1 = \theta_0$. Let $1_q$ denote a $q$-vector of ones and $X = (X_1,\dots, X_q)\sim N(0, \diag (\sigma^2_1,\dots, \sigma^2_q)(\theta_0))$. Notice that $T(\hat{\theta}_n) >  T^{\balpha}(\hat{\theta}_n,\Sym)$ if and only if \[T\bigl(\sqrt{n}(\hat{\theta}_n-\theta_0 1_q)\bigr) >  T^{\balpha}\bigr(\sqrt{n}(\hat{\theta}_n-\theta_0 1_q),\Sym\bigr).\] Hence, it suffices to prove the result with $X_n  = \sqrt{n}(\hat{\theta}_n-\theta_0 1_q)$ in place of $\hat{\theta}_n$. Because $X_n\leadsto X$, the desired result for $\theta_1 = \theta_0$ follows from Proposition \ref{p:controlasy} and Theorem \ref{t:bfbound}.

Suppose $\theta_1 = \theta_0 + \delta/\sqrt{n}$. Let $X_n = \sqrt{n}(\hat{\theta}_{n,k} - \theta_{1\{k\leq q_1\}})_{1\leq k\leq q}$ and $\Delta = (\delta1\{k\leq q_1\})_{1\leq k\leq q}$. Then $X_n + \Delta \leadsto X + \Delta$ by the assumed continuity and the Slutsky lemma. By construction, $T(\hat{\theta}_n) >  T^{\balpha}(\hat{\theta}_n,\Sym)$ is equivalent to $T(X_n + \Delta) > T^{\balpha}(X_n + \Delta,\Sym)$. Proposition \ref{p:controlasy} then implies \[ \prob \bigl( T(X_n+\Delta) >  T^{\balpha}(X_n + \Delta,\Sym)\bigr) \to \prob \bigl(T(X+\Delta) >  T^{\balpha}(X + \Delta,\Sym)\bigr). \] Now apply the lower bound developed in Theorem \ref{t:bfpower} to the right-hand side.

Suppose $\theta_1 = \theta_0 + \delta$. Let $\Delta_n = \sqrt{n}(\delta 1\{k\leq q_1\})_{1\leq k\leq q}$ so that $T(\hat{\theta}_n) \leq  T^{\balpha}(\hat{\theta}_n,\Sym)$ is equivalent to $T(X_n) \leq T^{\balpha}(X_n + \Delta_n,\Sym) - T(\Delta_n)$. For a large $M > 0$, the probability that the latter event occurs is bounded above by
\begin{equation}\label{eq:localaltapprox}
\prob \bigl( T(X_n) \leq -M \bigr) + \prob \bigl( T^{\balpha}(X_n + \Delta_n,\Sym) - T(\Delta_n) > -M\bigr).	
\end{equation}
The first term is bounded above by $\sup_n\prob(|T(X_n)| \geq M)$. This can be made as small as desired by choosing $M$ large enough because the continuous mapping theorem implies that $T(X_n)$ is uniformly tight. By the properties of quantile functions, the second term in the preceding display is equal to \[ \prob \Biggl( \sum_{g\in\Sym} 1\bigl\{ T(g X_n) + T(g \Delta_n) - T(\Delta_n) > -M \bigr\} > |\Sym|\balpha \Biggr). \]
Because $T(g \Delta_n) - T(\Delta_n) = 0$ for $g\in\bar{\Sym} = \{g \in \Sym : \sum_{k=1}^{q_1}{1\{g(k)\leq q_1\}} = q_1\}$ and $T(g \Delta_n) - T(\Delta_n) \leq -2\sqrt{n}\delta\to -\infty$ for $g\in \Sym \setminus \bar{\Sym}$, uniform tightness of $T(g X_n)$ for every $g\in\Sym$ implies $\prob ( 1\bigl\{ T(g X_n) + T(g \Delta_n) - T(\Delta_n) > -M \bigr\} = 1 ) = \prob(T(g X_n) + T(g \Delta_n) - T(\Delta_n) > -M)$ converges to $0$ for every given $M$ if $g\in \Sym \setminus \bar{\Sym}$. In addition, $T(gX_n) = T(X_n)$ for $g\in\bar{\Sym}$ and hence the preceding display is within $o(1)$ of \[ \prob \bigl( |\bar{\Sym}| 1\bigl\{ T(X_n) > -M \bigr\} > |\Sym|\balpha \bigr), \]
which equals zero if $|\bar{\Sym}| \leq |\Sym|\balpha$. Let $n\to\infty$ and then $M\to\infty$ in \eqref{eq:localaltapprox} to conclude $\prob(T(\hat{\theta}_n) >  T^{\balpha}(\hat{\theta}_n,\Sym))\to 1$ if $|\bar{\Sym}| \leq |\Sym|\balpha$. Because $|\bar{\Sym}| = q_1!(q-q_1)!$ and $|\Sym| = q!$, this proves the result for $\balpha\geq 1/{q\choose q_1}$. If $|\bar{\Sym}| > |\Sym|\balpha$ or, equivalently, $\lceil {q\choose q_1}(1-\balpha) \rceil =  {q\choose q_1}$, then $T^{\balpha}(\hat{\theta}_n,\Sym)$ is the maximal order statistic and the power of the test is zero for any sample size.
\end{proof}

\section{Numerical computation of $\balpha$}\label{s:balpha}
This section provides two algorithms for the numerical computation of $\balpha$ as in Table~\ref{tb:balphavals}. For the algorithms, notice that it is of no loss of generality to assume that the standard deviations $\sigma_1,\dots, \sigma_q$ are restricted to the interval $(0, 1]$ because both sides of $T(X) > T^{(j)}(X,\Sym)$ can be divided by the largest standard deviation without altering the test decision. 
\begin{algorithm}[$q_1$ and $q_0$ small]\label{al:small}
\begin{enumerate}
\item\label{al:initial} Choose $j$, starting with $j = |\Sym|-2$.
\item\label{al:sigmadraw} Draw a large number $R$ of iid copies $V^1,\dots, V^R$ of a $q$-vector $V$ with independent Beta$(a,b)$ entries, e.g., Beta$(0.1, 0.1)$.
\item\label{al:xdraw} For each $1\leq r\leq R$, draw a large number $S$ of iid copies $X^1,\dots, X^S$ of $X\sim N(0, \diag V^r)$ and approximate $\prob(T(X) > T^{(j)}(X,\Sym))$ by \[ \frac{1}{S}\sum_{s=1}^S 1\bigl\{ T(X^s) > T^{(j)}(X^s,\Sym) \bigr\}.\] 
\item If there is an $r$ in $1,\dots, R$ for which the number from step \eqref{al:xdraw} is larger than $\alpha$ (or, alternatively, $\alpha + \eta$ for a small tolerance $\eta > 0$), let $j^* = j+1$. If not, decrease $j$ by $1$ and restart at step \eqref{al:initial}.
\item Define $\balpha = 1 - j^*/{q \choose q_1}$.
\end{enumerate}
\end{algorithm}

\begin{algorithm}[$q_1$ or $q_0$ large]\label{al:large}
\begin{enumerate}
\item\label{al:alphastar} Choose a large number $m$. Choose $j$, starting with $j = m-2$. 
\item\label{al:sigmadrawr} Draw a large number $R$ of iid copies $V^1,\dots, V^R$ of a $q$-vector $V$ with independent Beta$(a,b)$ entries, e.g., Beta$(0.1, 0.1)$.
\item\label{al:xdrawr} For each $1\leq r\leq R$, draw a large number $S$ of iid copies $X^1,\dots, X^S$ of $X\sim N(0, \diag V^r)$ and approximate $\prob(T(X) > T^{(j)}(X,\Sym))$ by \[ \frac{1}{S}\sum_{s=1}^S 1\bigl\{ T(X^s) > T^{(j)}(X^s,\Sym_m) \bigr\}.\] 
\item If there is an $r$ in $1,\dots, R$ for which the number from step \eqref{al:xdrawr} is larger than $\alpha$ (or, alternatively, $\alpha + \eta$ for a small tolerance $\eta > 0$), let $j^* = j+1$. If not, decrease $j$ by $1$ and restart at step \eqref{al:alphastar}.
\end{enumerate}
\end{algorithm}

If ${q \choose q_1} < 1,500$, Table~\ref{tb:balphavals} uses two passes of Algorithm~\ref{al:small} with $a=b=0.1$ and $R=3,000$. The first pass computes steps (1)-(3) with $S=1,000$. The second pass takes, for each $j$, the top 1\% values of $1\leq r\leq R$ that led to the highest rejections and computes steps (3)-(5) with $S=10,000$. If ${q \choose q_1} \geq 1,500$, Table~\ref{tb:balphavals} uses two passes of Algorithm~\ref{al:large} with $a=b=0.1$, $R=3,000$, and $m=1,500$. The first pass computes steps (1)-(3) with $S=1,000$. The second pass takes, for each $j$, the top 1\% values of $1\leq r\leq R$ that led to the highest rejections and computes steps (3)-(5) with $S=10,000$. The Beta$(0.1,0.1)$ distribution is used here because highest rejection rates seem to occur near the boundaries of the parameter space where this distribution has most of its mass. 

\bibliographystyleAppendix{chicago}
\bibliographyAppendix{qspec.bib}


\end{document}